\newif\ifijuq
\ijuqfalse   

\newif\iflongappendix
\longappendixtrue    

\ifijuq
  \documentclass{ijuq}
\else
  \documentclass[11pt]{article}
  \usepackage[margin=1in]{geometry}
  \usepackage{amsmath,amssymb,amsthm}
  \usepackage{mathtools}
\usepackage{bm}
  \usepackage[numbers,sort&compress]{natbib}  
  \usepackage[colorlinks=true,
              linkcolor=blue,
              citecolor=blue,
              urlcolor=blue]{hyperref}
\fi

\usepackage{graphicx}
\usepackage{booktabs}        
\usepackage{microtype}       
\usepackage{xcolor}
\usepackage[boxruled]{algorithm2e}     
  \SetAlgoLined
  \SetKwComment{Comment}{$\triangleright$\ }{}
  \SetKwComment{tcp}{// }{}
\usepackage{cleveref}        
\usepackage{todonotes}                             

\ifijuq
\else
  \newtheorem{theorem}{Theorem}[section]
  \newtheorem{proposition}[theorem]{Proposition}

  \theoremstyle{remark}
\fi

\newcommand{\vv}[1]{\boldsymbol{#1}}   
\newcommand{\mm}[1]{\bm{\mathrm{#1}}}  

\newcommand{\params}{\vv{\theta}}

\newcommand{\design}{\vv{w}}
\newcommand{\designk}[1]{w_{#1}}
\newcommand{\weightmtx}{\mm{W}}
\newcommand{\simplex}{\Delta^{\nobs}}

\newcommand{\qoidomain}{\mathcal{X}_\mathrm{qoi}}

\newcommand{\data}{\vv{y}}
\newcommand{\datam}[1]{\data^{(#1)}}

\newcommand{\noisevec}{\vv{\varepsilon}}
\newcommand{\noisem}[1]{\noisevec^{(#1)}}

\newcommand{\obsmap}{f}
\newcommand{\obsmapeval}{\obsmap(\params,\,\design)}
\newcommand{\obsspace}{\mathcal{Y}}

\newcommand{\fwdmodel}{\obsmap}

\newcommand{\qoimap}{q}
\newcommand{\qoi}{q(\params)}
\newcommand{\qoivec}{\vv{q}}
\newcommand{\qoicomp}[1]{q_{#1}(\params)}

\newcommand{\qoispace}{\mathcal{Q}}
\newcommand{\paramspace}{\Theta}
\newcommand{\desspace}{\mathcal{W}}

\newcommand{\prior}{\pi}

\newcommand{\likeli}{p(\data \mid \params,\, \design)}

\newcommand{\post}{p(\params \mid \data,\, \design)}

\newcommand{\Gaussian}[2]{\mathcal{N}\!\left(#1,\,#2\right)}
\newcommand{\LogNormal}[2]{\mathcal{LN}\!\left(#1,\,#2\right)}

\newcommand{\obsmtx}{\mm{A}}

\newcommand{\noisecov}{\mm{\Gamma}}
\newcommand{\effnoisecov}{\mm{\Gamma}_\mathrm{eff}(\design)}

\newcommand{\priorcov}{\mm{\Sigma}}
\newcommand{\priormean}{\vv{\mu}}

\newcommand{\postcov}{\mm{\Sigma}_{\!*}(\design)}
\newcommand{\postcovb}{\mm{\Sigma}_{\!*}}
\newcommand{\postmean}{\vv{\mu}_*(\data,\,\design)}
\newcommand{\postmeanb}{\vv{\mu}_*}

\newcommand{\pmnu}{\vv{\nu}}           
\newcommand{\pmcov}{\mm{C}}            

\newcommand{\predvec}{\vv{\psi}}
\newcommand{\predmtx}{\mat{\Psi}}

\newcommand{\fisherb}{\mm{F}}

\newcommand{\devmeas}{\mathcal{D}^{\mathrm{pf}}}                              
\newcommand{\devmeasp}{\mathcal{D}^{\mathrm{pf}}_{p(\params \mid \data, \design)}}  

\newcommand{\devk}[1]{d_{#1}(\data,\design)}   
\newcommand{\devkb}[1]{d_{#1}}                   

\newcommand{\qoirisk}{\mathcal{R}^{\mathrm{QoI}}}                             
\newcommand{\qoiriskm}{\mathcal{R}^{\mathrm{QoI}}_{\mu}}                      
\newcommand{\levtwo}{r(\data,\design)}   
\newcommand{\levtwob}{r}                  

\newcommand{\datarisk}{\mathcal{R}^{\mathrm{data}}}                           
\newcommand{\datariskd}{\mathcal{R}^{\mathrm{data}}_{p(\data \mid \design)}}  

\newcommand{\utility}{U(\design)}
\newcommand{\utilityb}{U}

\newcommand{\kluty}{U_{\mathrm{KL}}(\design)}    

\newcommand{\outwt}[1]{\omega_{#1}^{\mathrm{out}}}
\newcommand{\inwt}[1]{\omega_{#1}^{\mathrm{in}}}

\newcommand{\outersamp}[1]{\params^{(#1)}}
\newcommand{\innersamp}[1]{\params^{(#1)}}

\newcommand{\impwt}[2]{\tilde{w}_{#2}^{(#1)}(\design)}

\newcommand{\residsym}{e}
\newcommand{\residk}[3]{\residsym_{#1#2#3}}

\newcommand{\E}[1]{\mathbb{E}\!\left[#1\right]}
\newcommand{\Eunder}[2]{\mathbb{E}_{#1}\!\left[#2\right]}
\newcommand{\Stdunder}[2]{\mathrm{Std}_{#1}\!\left[#2\right]}
\newcommand{\Varunder}[2]{\mathrm{Var}_{#1}\!\left[#2\right]}
\newcommand{\Var}[1]{\mathbb{V}\!\left[#1\right]}
\newcommand{\Std}[1]{\mathrm{Std}\!\left[#1\right]}
\newcommand{\Cov}[2]{\mathrm{Cov}\!\left[#1,\,#2\right]}
\newcommand{\AVaR}[2]{\mathrm{AVaR}_{#1}\!\left[#2\right]}
\newcommand{\VaR}[2]{\mathrm{VaR}_{#1}\!\left[#2\right]}
\newcommand{\KL}[2]{\mathrm{KL}\!\left(#1 \,\|\, #2\right)}
\newcommand{\EIG}{\mathrm{EIG}}
\newcommand{\normcdf}[1]{\Phi\!\left(#1\right)}

\newcommand{\normppf}[1]{\Phi^{-1}\!\left(#1\right)}

\newcommand{\nobs}{K}        
\newcommand{\nparams}{D}     
\newcommand{\nqoi}{Q}        
\newcommand{\nout}{M}        
\newcommand{\nin}{N}         

\newcommand{\lnscale}{K}
\newcommand{\lnscalej}[1]{K_{#1}}

\newcommand{\lntau}{\tau}
\newcommand{\lntauj}[1]{\tau_{#1}}

\newcommand{\lnsigmasq}{\sigma^2}
\newcommand{\lnsigmasqj}[1]{\sigma_{#1}^2}

\newcommand{\lnnu}{\nu}
\newcommand{\lnnuj}[1]{\nu_{#1}}

\newcommand{\lnsigmatau}{\sigma_\tau^2}
\newcommand{\lnsigmatauj}[1]{\sigma_{\tau,#1}^2}

\newcommand{\secref}[1]{Section~\ref{#1}}
\newcommand{\figref}[1]{Figure~\ref{#1}}
\newcommand{\tabref}[1]{Table~\ref{#1}}
\newcommand{\eqnref}[1]{(\ref{#1})}
\newcommand{\appref}[1]{Appendix~\ref{#1}}

\newcommand{\T}{^{\top}}

\newcommand{\inv}[1]{{#1}^{-1}}

\newcommand{\diag}{\mathrm{diag}}

\newcommand{\R}{\mathbb{R}}

\newcommand{\tr}{\mathrm{tr}}

\DeclareMathOperator*{\argmin}{arg\,min}

\newcommand{\Algref}[1]{Algorithm~\ref{#1}}

\usepackage{bm}   

\newcommand{\mat}[1]{\bm{\mathrm{#1}}}

\newcommand{\descond}{\xi}
\newcommand{\vdescond}{\vv{\xi}}
\newcommand{\descondspace}{\Xi}

\begin{document}

\ifijuq
  \title{Risk-Aware Goal-Oriented Bayesian Optimal Experimental Design}

  \author{
    John D.\ Jakeman\affil{1},
    Rebekah White\affil{1},
    Bart van Bloemen Waanders\affil{2},
    Drew P.\ Kouri\affil{1},
    Alen Alexanderian\affil{3}
  }

  \affiliation{1}{
    Optimization and Uncertainty Quantification Department,
    Sandia National Laboratories,
    Albuquerque, NM, USA
  }

  \affiliation{2}{
    Scientific Machine Learning Department,
    Sandia National Laboratories,
    Albuquerque, NM, USA
  }

  \affiliation{3}{
    Department of Mathematics,
    North Carolina State University,
    Raleigh, NC, USA
  }

  \maketitle

  \begin{abstract}
    Traditional Bayesian optimal experimental design (OED) selects measurements that best inform a model's parameters. However, such measurements can be suboptimal for downstream predictions.
Goal-oriented OED targets the prediction directly. However, the existing goal-oriented criteria value all reductions in predictive uncertainty equally, with no way to prioritize rare, high-consequence outcomes.
In this article, we develop a risk-aware framework that composes risk at three levels, each generalizing an ingredient of classical $I$- and $G$-optimal design: a deviation measure of the posterior predictive uncertainty (generalizing the predictive variance), a risk measure across the prediction domain (interpolating $I$-optimal averaging and $G$-optimal worst-case selection), and a risk measure over datasets (generalizing the expectation).
We generate each level from a regret function in the risk quadrangle, so that one triple specifies a practitioner's risk preference.
We relax the design to continuous weights on the unit simplex and construct a nested-quadrature estimator that is differentiable in the design variable. This enables solving the optimal design problem with gradient-based methods, avoiding a combinatorial search over candidate designs.
For a linear-Gaussian lognormal model and a nonlinear extension, we derive closed-form objectives. These give exact references against which we verify that the estimator converges.
We demonstrate this framework for finding optimal sensor placements in an inverse problem governed by an advection--diffusion equation.
We find that the risk-aware designs substantially outperform the expected-information-gain baseline, which is statistically indistinguishable from a random allocation.

  \end{abstract}

  \keywords{
    Goal-oriented optimal experimental design,
    risk-aware design,
    sensor placement,
    uncertainty quantification,
    Bayesian inverse problems
  }

\else
  \title{
    Risk-Aware Goal-Oriented Bayesian\\
    Optimal Experimental Design
  }

  \author{
    John D.\ Jakeman\thanks{Corresponding author.
      AI Credibility Department,
      Sandia National Laboratories, Albuquerque, NM, USA.
      \texttt{jdjakem@sandia.gov}}
    \and
    Rebekah White\thanks{Optimization and Uncertainty Quantification Department,
      Sandia National Laboratories, Albuquerque, NM, USA.}
    \and
    Bart van Bloemen Waanders\thanks{Scientific Machine Learning Department,
      Sandia National Laboratories, Albuquerque, NM, USA.}
    \and
    Drew P.\ Kouri\footnotemark[2]
    \and
    Alen Alexanderian\thanks{Department of Mathematics,
      North Carolina State University, Raleigh, NC, USA.}
  }

  \date{\today}

  \maketitle

  \begin{abstract}
    
  \end{abstract}

  \smallskip
  \noindent\textbf{Keywords:}
  Goal-oriented optimal experimental design,
  risk-aware design,
  sensor placement,
  uncertainty quantification,
  Bayesian inverse problems.
\fi


\section{Introduction}
Mathematical models are used to forecast the behavior of physical and engineered systems and to inform high-consequence decisions, such as whether a contaminant will exceed a regulatory threshold or a component will fail under service loading.
Uncertainty in model inputs and parameters, such as boundary conditions, material properties, and source terms, propagates to uncertainty in the model outputs that drive these decisions.
We call these outputs quantities of interest (QoIs).
Reducing uncertainty in these QoIs typically requires collecting experimental data to update the prior distribution over uncertain inputs to a posterior distribution. 
Pushing this posterior through the model (the posterior push-forward) yields the induced posterior distribution over the QoIs.
Measurements are limited by cost, time, and safety. Further, not all measurements are equally informative. Therefore, the choice of what to measure matters.
In the Bayesian setting, optimal experimental design (OED) formalizes this choice by selecting, \textit{a priori}, the measurements that maximize the expected information gained under a prescribed budget.

A common approach to Bayesian OED is to optimize a scalar function of the posterior distribution.
Classical criteria, such as $A$- and $D$-optimality, or the expected Kullback--Leibler (KL) divergence from the posterior to the prior, target the parameters, choosing measurements that make the parameter posterior as concentrated as possible~\cite{Chaloner_V_IMS_1995,Ryan_DMP_ISR_2016,Alexanderian_IP_2021,Rainforth_FIB_SS_2024}.
While these criteria are effective for parameter estimation, many applications focus on downstream tasks, like predicting the QoIs.
In this setting, parameter-focused OED (pOED) can expend effort on parameter directions only weakly coupled to the QoIs, yielding only marginal gains in predictive performance~\cite{Butler_JW_JCP_2020} and motivating alternative approaches~\cite{donatelli2025basicresearch}.

Goal-oriented optimal experimental design (gOED) addresses this mismatch by selecting experiments that reduce predictive uncertainty directly.
Rather than concentrating the parameter posterior, gOED targets the posterior push-forward—the distribution of the QoI obtained by mapping the parameter posterior through the parameter-to-QoI map.
Figure~\ref{fig:oed-concept} illustrates the resulting difference for sensor placement in a channel: pOED places sensors where they best constrain the inflow parameters, but those parameters are only weakly coupled to the downstream QoI, so its posterior push-forward barely tightens; gOED instead places sensors along the flow path to the receptor, markedly reducing predictive uncertainty.

\begin{figure}[htb]
    \centering
    \includegraphics[width=1\linewidth]{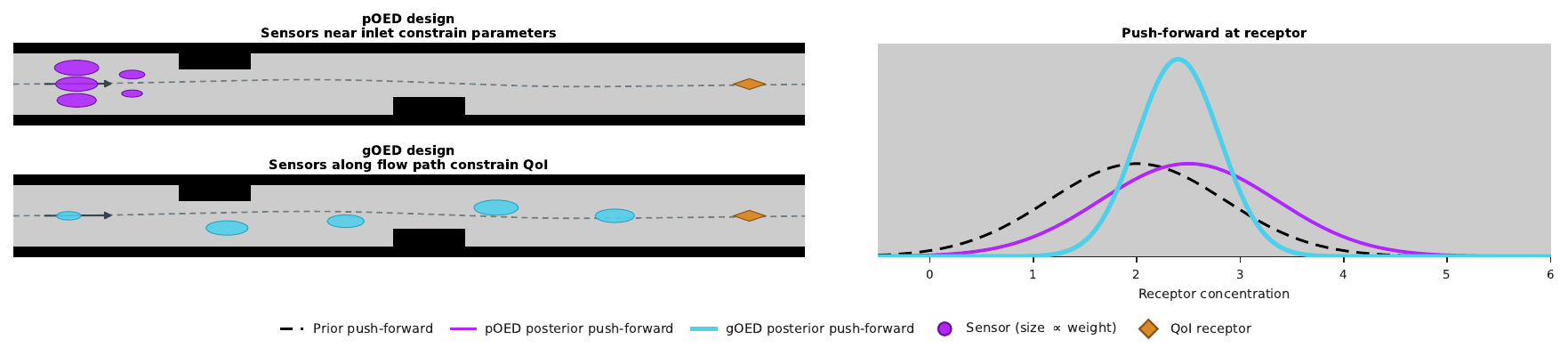}
    \caption{Conceptual comparison of pOED and gOED for sensor placement in a channel with flow obstruction.
    \emph{Left}: candidate sensor locations (circles, sized by design weight) and receptor location (diamond) defining the QoI; the pOED design (top-left) and gOED design (bottom-left) allocate weight to different sensors.
    \emph{Right}: prior (dashed) and posterior (solid) push-forward distributions of the QoI at the receptor under each design.}
    \label{fig:oed-concept}
\end{figure}

Existing gOED methods fall into two classes.
Laplace-based methods optimize a functional of the posterior push-forward covariance (typically its trace or determinant) averaged over plausible data under a Gaussian approximation of the posterior~\cite{Wu_CG_SISC_2023,Attia_A_S_IP_2018}.
In contrast, information-gain methods maximize the expected KL divergence between the prior and posterior push-forward, accommodating nonlinear and non-Gaussian models~\cite{Chakraborty_HC_Arxiv_2024,Zhong_SCH_Arxiv_2025}.
However, both cannot be tailored to a practitioner's risk preferences.
The covariance penalizes deviations on both sides of the mean equally, failing to prioritize the upper tail even when only large QoI values are of concern.
The KL divergence rewards reducing overall uncertainty in the posterior distribution, with no way to favor one region over another.
Moreover, both criteria are risk-neutral with respect to data uncertainty. Specifically, they evaluate performance by averaging over all plausible datasets, so rare but undesirable datasets are weighted the same as typical datasets.

In many applications, this insensitivity to risk is inadequate: the appropriate design criterion must respond to predictive uncertainty asymmetrically, emphasizing the outcomes that carry the highest decision cost.
For example, a reliability engineer cares about the probability that peak stress exceeds a failure limit, not about tightening the distribution around the mean of the Lam\'e  parameters.
Risk measures from decision theory, such as average value-at-risk (AVaR, also known as conditional value-at-risk) and entropic risk, provide an axiomatic way to encode such preferences and weight tail behavior accordingly.
Risk measures have proven valuable in surrogate modeling~\cite{Jakeman_KH_RESS_2021} and optimization under uncertainty~\cite{Kouri_S_SIAMOPT_2016}, but risk-aware criteria have seen limited development in OED and have been explored only in frequentist formulations~\cite{Kouri_JH_SIAMUQ_2021,peng2024efficient} that rely on large-sample approximations, which may be ill-suited for small observational budgets.
This gap motivates the framework developed in this paper, which integrates predictive-focused OED with a principled treatment of risk.

\emph{The main contribution of this work is a risk-aware optimality criterion for goal-oriented Bayesian OED that composes risk at three levels, each generalizing an ingredient of classical I- and G-optimal design.}
Three levels are required because goal-oriented Bayesian OED must:
(i) quantify dispersion in the posterior push-forward conditional on data, which we do using a deviation measure (e.g., standard deviation), generalizing the predictive variance underlying classical $I$- and $G$-optimality;
(ii) aggregate performance across components of a multivariate QoI, which we do using a risk measure that unifies $I$-optimal averaging and $G$-optimal worst-case aggregation; and
(iii) account for uncertainty in the dataset itself at design time, which we do using a risk measure over plausible data realizations, generalizing expectation-based methods and enabling designs that penalize anomalous observations for robustness.
At each level, we use the risk quadrangle~\cite{Rockafellar_U_SORMS_2013} to map a single regret function, which quantifies the displeasure associated with each possible outcome, to a matched deviation measure and risk measure.

To solve the sensor-design optimization problem induced by our risk-aware
goal-oriented OED criteria, we build on the nested ``double-loop'' Monte
Carlo estimators commonly used in Bayesian OED and reviewed in~\cite{Ryan_DMP_ISR_2016}.
These estimators use an outer loop to draw plausible datasets under a candidate design and an inner loop to sample the corresponding parameter posterior in order to evaluate a parameter-focused utility. 
We generalize this nested-estimation template to goal-oriented, risk-aware design by evaluating utilities of the posterior \emph{push-forward} to QoIs and by using risk measures tailored to a practitioner’s risk preferences.
Unlike existing goal-oriented OED approaches, this estimator avoids both the linearization of covariance-based criteria and the kernel density estimation of information-gain criteria, the latter degrading rapidly beyond a few QoIs and breaking down for continuous prediction fields~\cite{Terrel_S_AS_1992}.

Since practical designs involve discrete sensor selection and integer measurement allocations, the exact problem is combinatorial. We relax these constraints by allowing continuous weights on the unit simplex and reuse a fixed set of Monte Carlo samples, so the objective becomes a differentiable function of the design weights.
We further derive closed-form expressions for several risk-aware design objectives in a linear-Gaussian model with lognormal push-forward, which serve as exact references for verifying the convergence of the nested estimators.
Finally, we demonstrate the framework on a PDE-constrained advection--diffusion problem representative of surface-water quality monitoring, showing that different risk preferences yield distinct sensor placements and that parameter-focused designs can be suboptimal for goal-oriented objectives.

The rest of the paper is organized as follows.
\Cref{sec:problem_formulation} formulates the Bayesian inverse problem, its goal-oriented variant, and the resulting OED problem, with an illustrative linear-Gaussian example used throughout the paper.
\Cref{sec:utility} introduces risk measures and uses them to form risk-aware OED objectives.
\Cref{sec:computation} presents the numerical algorithm for computing risk-aware designs.
\Cref{sec:experiments} reports convergence studies and an advection--diffusion sensor-placement example comparing risk-aware goal-oriented designs against parameter-based OED.
Concluding remarks are presented in~\Cref{sec:conclusion}.

    \section{Problem Formulation}\label{sec:problem_formulation}
    OED chooses a design so that the resulting data are most informative about the model parameters or about a prediction expressed as a QoI.
    This section reviews the relevant background and introduces the linear-Gaussian example used throughout.
    
    \subsection{Bayesian Inference}\label{sec:bayes_inverse}
    Bayesian inference estimates model parameters \(\params \in \paramspace \subset \R^{\nparams}\) from data \(\data \in \obsspace \subset \R^{\nobs}\) collected under a design \(\design \in \desspace\), where \(\desspace\) denotes the set of admissible designs.
    To do so, it combines a prior density \(p(\params)\), encoding parameter uncertainty before data are observed, with a likelihood \(p(\data \mid \params, \design)\), the probability of observing \(\data\) for a parameter realization \(\params\) under design \(\design\).
    Bayes' rule then defines the posterior density of \(\params\) given \(\data\) and \(\design\) as
    \begin{equation}\label{eq:bayes_rule}
        p(\params \mid \data, \design)
        =
        \frac{p(\data \mid \params, \design)\,p(\params)}{p(\data \mid \design)}.
    \end{equation}
    The denominator is the \emph{evidence} (or marginal likelihood), \(p(\data \mid \design)=\int_{\paramspace} p(\data \mid \params, \design)\,p(\params)\,\mathrm{d}\params\).
    A common way to specify the likelihood is through an observation model relating the data to the parameters.
    The canonical example is the additive model \(\data = \fwdmodel(\params,\design) + \noisevec\), where \(\fwdmodel: \R^{\nparams} \times \desspace \to \R^{\nobs}\) is the parameter-to-observable map predicting \(\data\) from \(\params\) under design \(\design\), and \(\noisevec\) is mean-zero Gaussian measurement noise with covariance \(\noisecov\).
    The likelihood \(p(\data\mid\params,\design)\) is then the Gaussian distribution \(\mathcal{N}(\data;\fwdmodel(\params,\design),\noisecov)\).
    Evaluating the likelihood requires evaluating the map \(\fwdmodel\), which is expensive when governed by a PDE, and results in an expensive-to-compute evidence term.
    
    \subsection{Goal-Oriented Bayesian Inference}\label{sec:goal_bayes}
    Often the inference of parameters is not the goal of a study, but an intermediate step toward predicting a QoI that cannot be observed directly, such as contaminant concentration at a downstream receptor or peak stress in a mechanical component.
    Typically, these predictions are collected in a vector-valued QoI map \(\qoivec(\params) = (q_1(\params),\ldots,q_{\nqoi}(\params)) \in \qoispace \subset \R^{\nqoi}\), whose components \(q_j(\params)\) are the individual scalar predictions for \(j = 1,\ldots,\nqoi\).
    These components may be a finite collection of distinct predictions or the discretization of a predictive field at \(\nqoi\) locations.
    A parameter density \(\pi(\params)\) then induces a density over the QoI, the push-forward of \(\pi\) through \(\qoivec\),
    \begin{equation}\label{eq:post_pushforward}
        \pi_{\text{pf}}({\bm z})
        =
        \int_{\paramspace}
        \delta\!\left({\bm z}-\qoivec(\params)\right)\,
        \pi(\params)\,
        \mathrm{d}\params ,
    \end{equation}
    where \(\delta(\cdot)\) denotes the Dirac delta.
    Taking \(\pi\) to be the prior \(p(\params)\) or the posterior \(p(\params \mid \data, \design)\) yields the prior or posterior push-forward, respectively.
    In practice, push-forward distributions are commonly approximated by Monte Carlo propagation, in which samples \(\{\params^{(i)}\}\sim \pi(\params)\) are mapped through \(\qoivec\) to produce QoI samples \(\{\qoivec(\params^{(i)})\}\) distributed according to \(\pi_{\text{pf}}\).
    Because the posterior depends on the design \(\design\), so too does its push-forward: the design controls how much the data reduce predictive uncertainty in the QoI, and goal-oriented design exploits this dependence to target the QoI directly.

    \subsection{Optimal Experimental Design}\label{sec:oed_problem}
OED requires a \emph{design objective} \(\mathcal{U}(\design)\) that quantifies the expected quality of the experimental outcome under a design \(\design\), together with a feasible design space \(\desspace\) over which to optimize it.
A design can be parameterized in many ways: one may optimize sensor locations or measurement times continuously over a domain \(\descondspace\), giving \(\desspace = \descondspace^{S}\) for \(S\) measurements.
However, such feasible sets are typically nonconvex, hard to optimize, and require fixing \(S\) in advance.
We instead adopt a relaxation over a fixed set of candidate conditions that yields a convex feasible set amenable to gradient-based optimization.

Let \(\vdescond=\{\descond_1,\ldots,\descond_\nobs\}\subset\descondspace\) be \(\nobs\) candidate experimental conditions, for example sensor locations, measurement times, or controlled inputs.
We represent a design by a nonnegative weight vector \(\design=(w_1,\ldots,w_{\nobs})^\top\) on the simplex,
\[
    \simplex := \left\{ \design\in\R^{\nobs} \;\middle|\; w_k \ge 0\ \forall k,\ \ \sum_{k=1}^{\nobs}w_k = 1 \right\},
\]
which we take as the design space \(\desspace\) for this parameterization.
Here, \(w_k\) is the fraction of the total measurement budget allocated to condition \(\descond_k\), so the weights act as continuous proxies for the relative importance of each candidate.
A design with a fixed number of measurements is recovered by rounding the optimized weights to integer allocations, for which standard heuristics exist~\cite{YU201844}.

Because the data \(\data\) are unknown at design time, the design objective must account for their distribution \(p(\data\mid\design)\); conventional objectives do so by taking an expectation over \(\data\sim p(\data\mid\design)\).
We frame the objective \(\mathcal{U}(\design)\) as the predictive uncertainty remaining under a design and minimize it, so that smaller values indicate better designs:
\begin{equation}\label{eq:OED_optimization}
    \design^\star \in \argmin_{\design \in \simplex} \mathcal{U}(\design).
\end{equation}

A canonical Bayesian objective for parameter-focused OED is the expected information gain (EIG), defined as the expected Kullback--Leibler (KL) divergence from the posterior to the prior:
\begin{equation}\label{eq:OED_obj}
    \EIG(\design)
    :=
    \mathbb{E}_{p(\data\mid\design)}
    \!\left[
        \phi_{\mathrm{KL}}\!\left(p(\params \mid \data, \design)\,\|\,p(\params)\right)
    \right]
    =
    \int_\obsspace \int_{\paramspace}
        \ln\!\left(
            \frac{p(\params \mid \data, \design)}{p(\params)}
        \right)
        p(\params \mid \data, \design)\,
        p(\data \mid \design)\,
    \mathrm{d}\params\,\mathrm{d}\data .
\end{equation}
EIG measures the expected reduction in parameter uncertainty, so more informative designs have larger EIG.
Because our convention instead penalizes the uncertainty that remains, we minimize its negation.
Specifically, we set the KL design objective
\begin{equation}\label{eq:kl_objective}
  \kluty := -\EIG(\design).
\end{equation}
Minimizing \(\kluty\) selects designs whose posteriors, averaged over likely datasets \(\data\sim p(\data\mid\design)\), differ as much as possible from the prior in the KL sense.
This is natural for parameter estimation, but does not distinguish parameter directions according to their influence on the QoI map \(\qoivec\).
This limitation motivates the goal-oriented and risk-aware objectives of \secref{sec:utility}, against which we compare EIG as a baseline throughout.
    
    \subsection{Linear Gaussian Example}\label{sec:linear_gaussian_running_example}
    To build intuition and enable analytical validation of our computational framework, we introduce a concrete running example that we use throughout the paper.
    We consider a linear observation model \(\obsmapeval=\obsmtx\params\) with additive Gaussian noise,
    \begin{equation}\label{eq:obs_model}
        \data = \obsmtx\params + \noisevec,
        \qquad
        \noisevec \mid \design \sim \mathcal{N}\!\left(\vv{0},\,\effnoisecov\right).
    \end{equation}
    Here \(\obsmtx\in\R^{\nobs\times\nparams}\) is the observation matrix, with each row determined by a fixed candidate location, so \(\obsspace\subseteq\R^{\nobs}\).
    The design enters only through the noise.
    Specifically, we define a baseline noise covariance that is independent across candidates, \(\noisecov = \diag(\sigma_{\varepsilon,1}^2,\ldots,\sigma_{\varepsilon,\nobs}^2)\), and let the weight \(\designk{k}\) scale the precision at candidate \(k\), giving the diagonal effective noise covariance
    \[
        \effnoisecov = \diag\!\left(
            \frac{\sigma_{\varepsilon,1}^2}{\designk{1}},\;\ldots,\;
            \frac{\sigma_{\varepsilon,\nobs}^2}{\designk{\nobs}}
        \right).
    \]
    A larger weight shrinks the effective variance at that candidate.
\iflongappendix
    With a Gaussian prior \(\params\sim\mathcal{N}(\priormean,\priorcov)\), the model is conjugate: the posterior is Gaussian with design-dependent covariance \(\postcov\) and data-dependent mean \(\postmean\), given in closed form in \appref{app:model}.
\else
    With a Gaussian prior \(\params\sim\mathcal{N}(\priormean,\priorcov)\), the model is conjugate: the posterior is Gaussian with design-dependent covariance \(\postcov\) and data-dependent mean \(\postmean\), given in closed form by the standard conjugate-Gaussian formulas.
\fi
    
    To illustrate goal-oriented design with non-Gaussian predictive distributions, we define the QoI map as a componentwise exponential of a linear functional,
    \begin{equation}
     \label{eq:lognormal_qoi}
        \qoivec(\params) = \exp\!\left(\predmtx^\top \params\right) \in \R^{\nqoi},
    \end{equation}
    where \(\predmtx=[\predvec_1,\ldots,\predvec_{\nqoi}]\in\R^{\nparams\times\nqoi}\) and each \(\predvec_j\in\R^{\nparams}\) specifies a prediction direction.
    The components index a prediction-space variable: each direction \(\predvec_j = \predvec(x_j)\) is associated with a location \(x_j\in\qoidomain\), so neighboring components correspond to nearby prediction locations.
    Because \(\predmtx^\top \params\) is Gaussian under both the prior and posterior, the corresponding prior and posterior push-forward distributions of \(\qoivec(\params)\) are lognormal.
    Thus, even though the parameter posterior is Gaussian, the push-forward is non-Gaussian, making this a nontrivial yet analytically tractable benchmark with closed-form expressions for the risk-aware design objectives introduced in \secref{sec:utility}.

    This example also exposes the central limitation of parameter-focused design.
    \figref{fig:pushforward} contrasts two designs: \(\design_A\) aligns its sensor with the observation model and sharply collapses the posterior along \(\theta_1\), while \(\design_B\) aligns with the QoI direction \(\predvec\) and collapses it along \(\theta_2\) instead.
    Although \(\design_A\) produces the more concentrated posterior, its posterior push-forward is similar to the prior push-forward, whereas \(\design_B\) sharply reduces predictive uncertainty (\figref{fig:pushforward}b).
    The two designs therefore disagree across uncertainty measures: \(\design_A\) dominates the parameter-space criterion (posterior covariance determinant, \figref{fig:pushforward}c), while \(\design_B\) dominates the push-forward standard deviation (\figref{fig:pushforward}d).
    A design can thus be highly informative of the model parameters, while failing to inform predictions, with an extreme case given by a sensor placement that only informs parameter dimensions orthogonal to the QoI.
    This motivates the goal-oriented objectives developed in \secref{sec:utility}.

 \begin{figure}[!htbp]
\centering
\includegraphics[width=\textwidth]{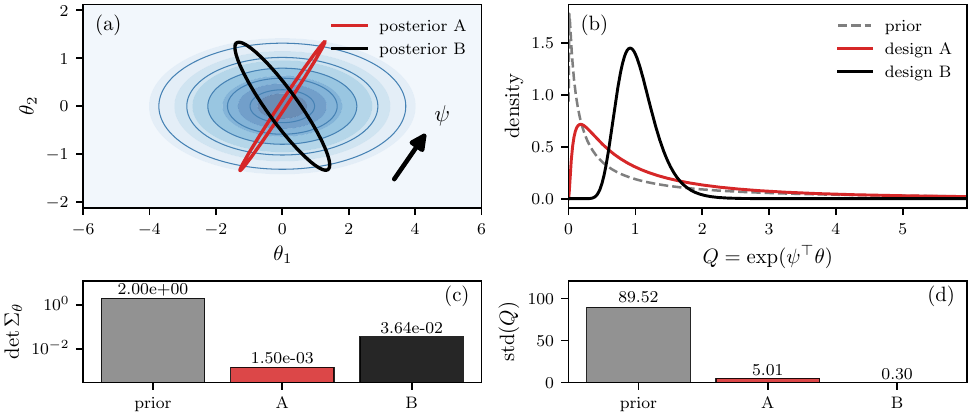}
\caption{%
 Observation model, QoI map, and posterior push-forward on the
 lognormal running example (\secref{sec:linear_gaussian_running_example}).
 \textit{(a)} Parameter space with anisotropic prior (blue filled
 contours) and $2\sigma$ posterior ellipses under designs $\design_A$
 (red) and $\design_B$ (black); the arrow marks the QoI direction
 $\predvec$.
 \textit{(b)} Prior (gray dashed) and posterior push-forwards of
 $\qoimap(\params)=\exp(\predvec\T\params)$ under $\design_A$ (red,
 nearly identical to the prior) and $\design_B$ (black, substantially
 narrower).
 \textit{(c, d)} Posterior covariance determinant $\det(\postcov)$ and
 push-forward standard deviation $\Stdunder{p(\params\mid\data,\design)}{q}$
 for the prior, $\design_A$, and $\design_B$.%
}
      \label{fig:pushforward}
    \end{figure}

\section{Risk-Aware Goal-Oriented Design Objective}
\label{sec:utility}

A risk-aware goal-oriented design objective must turn a posterior
push-forward into a single scalar that can be minimized over candidate
designs.
Formulating the scalar objective requires three successive aggregation steps over: (i) the posterior push-forward uncertainty in the QoI given a specific data realization; (ii) the QoI which may be a field on a prediction domain; and (iii) and the pre-experimental uncertainty about what data will data realization will be observed, captured by the marginal \(p(\data\mid\design)\).
Each aggregation must explicitly encodes the practitioner's attitude toward risk.
This section develops a design-objective construction that makes
these aggregation choices clear and separates them into three
independent stages.
Level~1 summarizes the spread of the posterior push-forward for a
fixed dataset; level~2 aggregates these spreads across the components
of a vector- or field-valued QoI; and level~3 aggregates across the
data not yet observed.
Level~1 uses a \emph{deviation measure}; levels~2 and~3 use a
\emph{risk measure}, each tied to practitioner risk preferences
through the risk-quadrangle framework~\cite{Rockafellar_U_SORMS_2013}.

\subsection{Deviation and Risk Measures}
\label{sec:utility:framework}

A \emph{deviation measure} $\devmeas$ quantifies how non-constant
a random variable $Z$ is: it is non-negative, vanishes if and
only if $Z$ is almost surely constant, and is insensitive to
location shifts, $\devmeas[Z + c] = \devmeas[Z]$ for any
constant $c \in \R$.
It is often generated from a regret measure $\mathcal{V}$ that quantifies a practicioner's displeasure. 
Specifically, the deviation is the minimum expected error after optimal recentering,
\begin{equation}
  \label{eq:dev:from:regret}
    \mathcal{D}(Z) =  \inf_{d\in\R}\,\{d+\mathcal{V}(Z-d)\}-\E{Z}.
\end{equation}

Certain deviation measures can be generated by a \emph{regret function} $v : \R \to \R$ and its associated \emph{error function} $e(x)=v(x)-x$, which quantitatively encode the practitioner's aversion to large, high-consequence events.
The shape of $v$ sets how spread is penalized: a symmetric regret
penalizes departures in either direction equally, like the standard
deviation, suiting cases where both directions matter to the
practitioner.
An asymmetric regret weights one tail more heavily, suiting cases where only one direction matters, such as a contaminant concentration exceeding a regulatory threshold (see \figref{fig:regret} for example regret functions).

A \emph{risk measure} $\mathcal{R}$~\cite{Rockafellar_U_SORMS_2013}
provides a quantitative answer to the question ``is the random
outcome $Z$ acceptable relative to a threshold $\tau$, given that
uncertainty may occasionally send it above $\tau$?'', producing a
scalar summary $\mathcal{R}[Z]$ that reflects both the
distribution's location and the weight of its unfavorable outcomes.
We focus on risk measures that decompose into a mean and the
associated deviation so the same regret function fixes both operators at once, i.e.,
\begin{equation}
  \label{eq:risk:decomp}
  \mathcal{R}(Z) = \E{Z} + \mathcal{D}(Z) = \inf_{d\in\R}\,\{d+\E{v(Z-d)}\}.
\end{equation}

Our framework accepts any valid deviation measure at level~1 and any
valid risk measure at levels~2 and~3.
In this paper, we use three regret measure families as representative
examples (\Cref{tab:families}): a safety-margin family, AVaR, and
entropic risk.
\begin{table}[h!]
\centering
\caption{Regret, deviation, and risk measure families.}
\begin{tabular}{lccc}
\toprule
 & Regret Measure & Deviation measure & Risk measure \\
\midrule
(safety-margin)
& $\mathcal{V}(Z) = c\lVert Z \rVert_2 + \E{Z}$
& $\mathcal{D}(Z) = c\,\Std{Z}$
& $\mathcal{R}(Z) = \E{Z} + c\,\Std{Z}$ \\[6pt]

(AVaR)
& $\mathcal{V}(Z) = \E{\tfrac{\max(0,Z)}{1-\alpha}}$
& $\mathcal{D}(Z) = \AVaR{\alpha}{Z - \E{Z}}$
& $\mathcal{R}(Z) = \AVaR{\alpha}{Z}$ \\[6pt]

(entropic)
& $\mathcal{V}(Z) = \E{\tfrac{1}{\lambda}\big(e^{\lambda Z} - 1\big)}$
& $\mathcal{D}(Z) = \tfrac{1}{\lambda}\log\E{e^{\lambda(Z-\mathbb{E}[Z])}}$
& $\mathcal{R}(Z) = \tfrac{1}{\lambda}\log\E{e^{\lambda Z}}$ \\
\bottomrule
\end{tabular}
\label{tab:families}
\end{table}
The safety-margin family is symmetric, penalizing deviations above and below the mean in the same way;\footnote{%
The safety-margin deviation measure that yields the mean-plus-standard-deviation risk measure (used in this paper; see Table~\ref{tab:families}) cannot be written as the expectation of a regret function. However, the mean-plus-variance variant of the safety-margin family can be expressed as the expectation of the regret function \(v(t)=t+ct^2\).%
}
AVaR ignores deviations below the
$(1-\alpha)$-quantile and averages the upper tail; and the entropic
family reweights the distribution exponentially toward the upper
tail.
The associated regret functions and their action on a reference distribution are
depicted in \figref{fig:regret} and \figref{fig:deviation:dist}.
The AVaR, in terms of the $\VaR{\alpha}{Z} = \inf\{z\in\R\, \vert\, \mathbb{P}(Z \leq z) \geq \alpha\}$, is defined as
\begin{equation}
  \label{eq:risk:avar}
  \mathcal{R}^{\mathrm{AVaR}}_\alpha(Z)
    = \VaR{\alpha}{Z}
      + \frac{1}{1-\alpha}\,\E{\max\!\left(0,\, Z - \VaR{\alpha}{Z}\right)} = \frac{1}{1-\alpha}\int_{\alpha}^1\VaR{t}{Z}\,\text{d}t.
\end{equation}

\subsection{The Design Objective}
\label{sec:utility:objective}

The design objective composes three stages, each applying a deviation or risk measure to one source of uncertainty in the posterior push-forward.
For simplicity, we use the operator's subscript to denote the distribution or measure it acts under, and its bracketed argument is the random variable being summarized (the conditioning is not repeated inside the bracket).
For a fixed dataset, level~1 applies the deviation measure to the
posterior push-forward of each scalar QoI component $q_j(\params)$.
We write this push-forward deviation as $\devmeas$ and the resulting
level-1 output as
\begin{equation}
  \label{eq:deviation}
  \devk{j} \coloneqq \devmeasp\!\left[q_j(\params)\right],
\end{equation}
the deviation of the posterior push-forward of component
$q_j(\params)$.
For the same fixed dataset, level~2 aggregates these deviations
across the QoI components with a risk measure under a probability
measure $\mu$ on the prediction domain~$\qoidomain$, giving one
scalar,
\begin{equation}
  \label{eq:level2}
  \levtwo \coloneqq \qoiriskm\!\left[
    \devk{1}, \ldots, \devk{\nqoi}
  \right].
\end{equation}

The measure $\mu$ sets the relative importance of each prediction
location: for a collection of scalar QoIs, it is a user-assigned
importance weighting (a discrete probability mass over the
components), and for a discretized field it is the quadrature rule
approximating integration over $\qoidomain$.
In both cases $\mu$ is represented computationally by a vector of
prediction weights $\{\gamma_j\}_{j=1}^{\nqoi}$ with $\gamma_j \geq 0$
and $\sum_j \gamma_j = 1$, which discretise $\mu$ onto the $\nqoi$
receptor locations; uniform weights $\gamma_j = 1/\nqoi$ recover
the mean over components.

Since the data are unobserved at design time, level~3 aggregates
$\levtwo$ across the data marginal $p(\data\mid\design)$.
Composing the three stages gives the complete risk-aware
goal-oriented design objective:
\begin{equation}
  \label{eq:utility:full}
  \utility =
  \underbrace{
    \datariskd
  }_{\text{level 3: data risk}}
  \!\!\left[\;
  \underbrace{
    \qoiriskm
  }_{\text{level 2: QoI risk}}
  \!\!\left[\;
  \underbrace{
    \left(\devmeasp\!\left[
      q_j(\params)
    \right]\right)_{j=1}^{\nqoi}
  }_{\text{level 1: deviation}}
  \;\right]\;\right],
\end{equation}
which we minimize over $\design \in \simplex$.
Because each stage is built from a deviation or risk measure,
$\utility$ is non-negative and vanishes only in the absence of
posterior uncertainty.
The next three subsections treat each stage in turn, providing its
operator, its action on the running example, and the elicitation
question it poses.

\begin{figure}[!htbp]
  \centering
   \includegraphics[width=\textwidth]{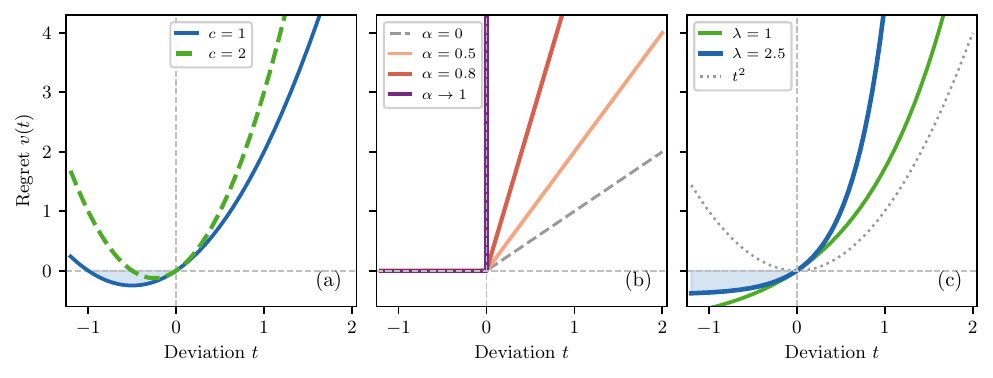}
  \caption{%
    Regret functions $v(t)$ for the three families of
    \Cref{tab:families}, plotted over
    $t\in[-1.2,\,2]$ so that asymmetry is visible.
    \textit{(a)} linear symmetric regret $v(t) = t+ct^2$
    (the variance form of the safety-margin family) at two values of $c$; both signs
    penalized equally.
    \textit{(b)} piecewise-linear regret (AVaR); negative
    deviations carry zero penalty, positive deviations are
    penalized with slope $1/(1-\alpha)$, diverging as
    $\alpha\to 1$.
    \textit{(c)} exponential regret
    $v(t) = (e^{\lambda t}-1)/\lambda$ (entropic) at two values
    of $\lambda$; the regret is \emph{negative} for $t<0$
    (shaded), so below-mean outcomes reduce the deviation while
    above-mean outcomes are penalized super-quadratically.%
  }
  \label{fig:regret}
\end{figure}

\begin{figure}[!htbp]
  \centering
  \includegraphics[width=\textwidth]{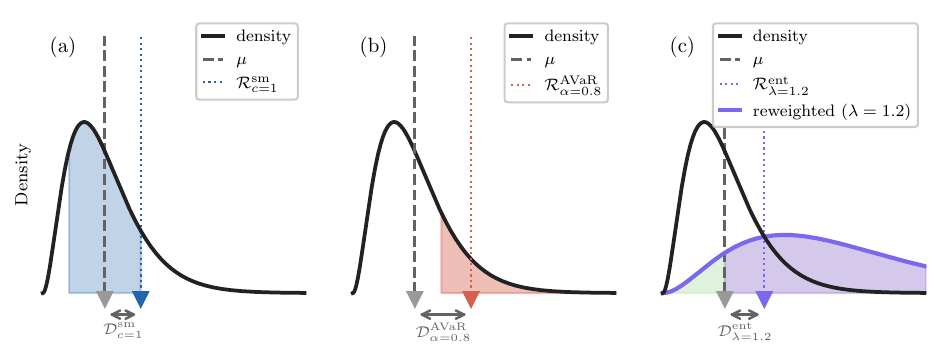}
  \caption{%
    How each deviation measure operates on a reference distribution
    of Level-1 outputs $\devkb{j}$ (black curve;
    right-skewed, as is typical for push-forward deviation values).
    In each panel the gray dashed line marks the mean $\mu$; the
    colored dotted line marks the risk measure value
    $\mathcal{R}$; the gray double-headed arrow shows the
    deviation $\mathcal{D} = \mathcal{R} - \mu$.
    \textit{(a)} Standard deviation ($c=1$): the shaded region marks
    $\pm 1\sigma$ around the mean.
    \textit{(b)} AVaR deviation ($\alpha=0.8$): the shaded tail
    region above the $\alpha$-quantile defines the conditional mean
    $\mathcal{R}^{\mathrm{AVaR}}$.
    \textit{(c)} Entropic deviation ($\lambda=1.2$): the purple
    curve is the exponentially reweighted density; below-mean
    outcomes (light green) carry weight $<1$ and reduce the
    deviation, above-mean outcomes (light purple) carry weight $>1$
    and are amplified.%
  }
  \label{fig:deviation:dist}
\end{figure}

\subsection{Level~1: Deviation of the Posterior Push-forward}
\label{sec:utility:level1}

Level~1 applies a deviation measure to the posterior push-forward of
a scalar QoI component, producing the deviation $\devk{j}$
of \eqref{eq:deviation}.
The push-forward, and hence $\devkb{j}$, depends on the observed data
$\data$, and captures only the spread of the push-forward, not its
mean.
On the running example (\secref{sec:linear_gaussian_running_example}), the posterior push-forward of each component
$q_j(\params) = \exp(\predvec_j\T\params)$ is a univariate lognormal,
and applying the safety-margin deviation with $c=1$ (the standard
deviation) returns one scalar per dataset.
Because the push-forward changes with the data, this scalar varies
across realizations, producing the deviation surface in
\figref{fig:level1}.
Since the prediction directions $\predvec_j = \predvec(x_j)$ vary
continuously with the prediction location $x_j$
(\secref{sec:linear_gaussian_running_example}), the deviation varies
smoothly across components, so the surface appears smooth even though
it is assembled from finitely many components.

\begin{figure}[!htbp]
  \centering
  \includegraphics[width=\textwidth]{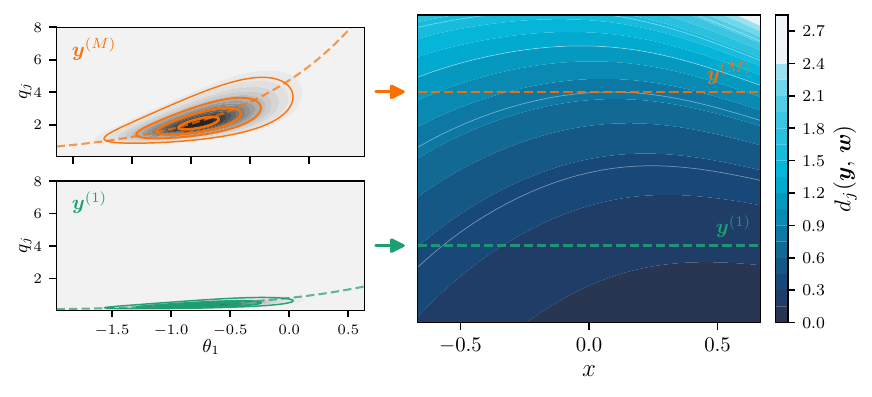}
  \caption{%
    Depiction of level-1 standard deviation measure for the running example. 
    \textit{Left:} joint posterior densities $p(q_j,\theta\mid \datam{m},\design)$
    for two realizations of data $\datam{1}$ and
    $\datam{\nout}$ concentrating along the QoI curve.
    \textit{Right:} the standard deviation surface interpolated over 
    the components of the QoI ($x$-axis) and data realizations ($y$-axis) 
    where the two data realizations are noted with horizontal lines.
  }
  \label{fig:level1}
\end{figure}

The choice of deviation measure encodes how the practitioner values
spread and changes the surface in the right panel of
\figref{fig:level1}.
The standard deviation, being symmetric, suits cases where departures
in either direction matter equally; when only one direction carries
the consequence---for example, a contaminant concentration that is dangerous
only when it exceeds a threshold---the AVaR or entropic deviation,
which weight the upper tail more heavily, are preferable.
Minimizing the resulting objective therefore selects the design that
most reduces spread, weighting departures in each direction
according to the practitioner's preferences.

\subsection{Level~2: Aggregation Across QoI Components}
\label{sec:utility:level2}

For a vector- or field-valued QoI
$\qoivec = (q_1(\params), \ldots, q_{\nqoi}(\params))$, level~1
produces one deviation per component, which level~2 aggregates into
$\levtwo$ of \eqref{eq:level2} using the prediction-domain
risk measure \(\qoiriskm\).
On the running example, level~2 collapses the prediction-location
axis of the deviation surface to a curve $\levtwo$, one
scalar per dataset.
\figref{fig:level2} shows this curve under two choices of risk
measure on the same surface.
Both rise with the data, but summarize the deviation profile
differently: $\mathrm{AVaR}_\alpha$ averages only the
worst-performing fraction of prediction locations, while the
entropic measure reweights the entire profile exponentially toward
its upper tail.

\begin{figure}[!htbp]
  \centering
  \includegraphics[width=\textwidth]{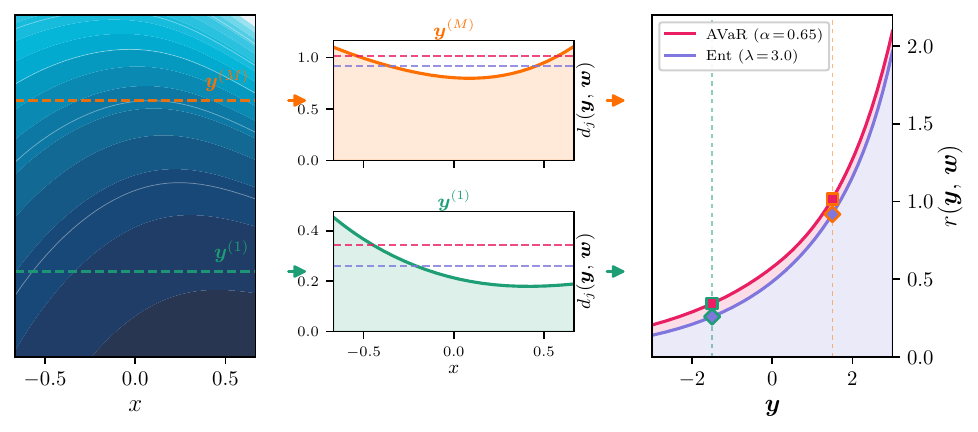}
  \caption{%
    Level-2 QoI risk measure: aggregating the deviation $\devkb{j}$
    over the QoI components at a fixed design $\design$.
    \textit{Left:} the deviation surface of \figref{fig:level1}, with
    two data realizations marked (dashed horizontal lines).
    \textit{Middle:} the deviation profile $\devkb{j}$ for each of the two data realizations, 
    with horizontal lines marking
    the scalar $\levtwo$ assigned by $\mathrm{AVaR}_\alpha$ and entropic
    risk measures.
    \textit{Right:} the resulting $\levtwo$ as a function of the
    data $\data$, with the two realizations marked.
  }
  \label{fig:level2}
\end{figure}

Specifying $\alpha\in[0,1)$ in level~2 provides a tunable mechanism for controlling the relative emphasis placed on average versus extreme prediction uncertainties: $I$-optimality ($\alpha=0$) recovers spatially averaged uncertainties, while increasing $\alpha$ more heavily penalizes extreme uncertainties, which can be highly relevant if there exist critical locations, such as a downstream population center, or a sensor near sensitive infrastructure, where where poor predictive performance carries disproportionately high consequences.
As $\alpha$ ranges from $0$ to $1$, the objective sweeps from the
average to the single worst component.
With the standard deviation at level~1, this is the classical
interpolation between $I$- and $G$-optimality~\cite{Kouri_JH_SIAMUQ_2021}.

\subsection{Level~3: Aggregation Across Data Realizations}
\label{sec:utility:level3}

Since optimal designs are determined prior to collecting data $\data$, the output of level-2 aggregation $\levtwo$ is
itself random under the marginal $p(\data\mid\design)$.
Level~3 aggregates it with a risk measure to give the scalar design
objective,
\begin{equation}
  \label{eq:level3}
  \utilityb(\design) \coloneqq \datariskd\!\left[\levtwo\right].
\end{equation}
On the running example, the induced distribution of $\levtwob$ is
right-skewed: most datasets give moderate deviation, but a small
fraction of adverse outcomes give disproportionately large $\levtwob$.
\figref{fig:level3} shows two level-3 choices on this distribution:
the mean weights all datasets equally, while $\mathrm{AVaR}_\beta$
concentrates on the right tail.

\begin{figure}[!htbp]
  \centering
  \includegraphics[width=0.5\textwidth]{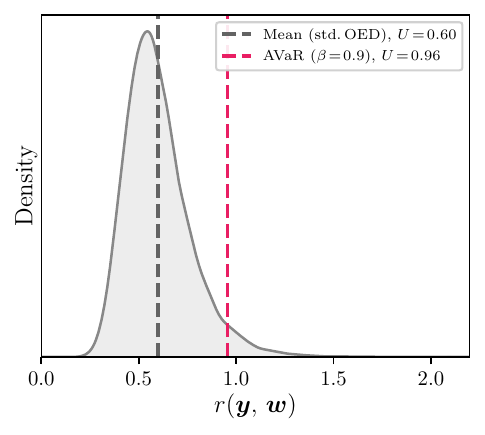}
  \caption{%
   Application of Level-3 data risk measures which acts on the probability density of the
   Level-2 aggregation $\levtwob$ where $y \sim p(\data\mid\design)$. standard Bayesian OED objective takes the average (risk-neutral) 
   and $\mathrm{AVaR}_\beta$ ($\beta=0.9$) weights undesirable large values more heavily than small values.
  }
  \label{fig:level3}
\end{figure}

The choice of Level~3 measure answers how much the practitioner is
willing to be exposed to the adverse datasets that may be realized.
If being right on average across possible datasets is acceptable,
the standard Bayesian OED approach that leverages the mean is the natural choice.
However, if the practitioner wants to guard against adverse datasets,
even at the cost of average performance,
$\mathrm{AVaR}_\beta$ averages $\levtwo$ over only the
worst $(1-\beta)$-fraction of datasets, so the objective measures
the uncertainty left under the most adverse outcomes.
Minimizing it therefore selects the design that most reduces
uncertainty under those adverse datasets, rather than on average.
In contrast, the safety margin $\mathcal{R}^{\mathrm{sm}}_c$ offers an
alternative with an engineering interpretation: the design is
acceptable with high probability when $c$ covers a chosen number of
standard deviations above the mean, with larger $c$ being more
conservative.

\paragraph{When preferences cannot be agreed upon.}
In practice multiple stakeholders may hold competing risk
preferences, or a single stakeholder may be genuinely uncertain
about their risk attitude.
Rather than commit to a single specification, one can examine how
sensitive the optimal design is to the choice of deviation and risk
measures, and so identify which choices materially affect the design.
We demonstrate this in \secref{sec:experiments:advdiff:cost} using a
cost-effective approach that quantifies how much each level and
risk measure changes the recommended experiment.

\section{Estimation and Optimization}
\label{sec:computation}

The design objective $\utility = \datariskd\!
\bigl[\qoiriskm\bigl[\devmeasp
[\qoi]\bigr]\bigr]$ nests three operations: an inner posterior
deviation $\devmeas$ over $p(\params\mid\data,\design)$ for each
QoI component (level~1), a risk measure $\qoirisk$ aggregating
across QoI components (level~2), and an outer risk measure
$\datarisk$ over the data marginal $p(\data\mid\design)$
(level~3).
Levels~1 and~3 are expectations against distributions that are
not available in closed form for general nonlinear models and
priors, and so must be approximated by sampling; level~2, in
contrast, acts on the already-sampled per-component deviations
and requires no additional forward-model evaluations.
This section develops the computational machinery for estimating
and minimizing $\utility$ on this basis.
Although the three-level framework of \secref{sec:utility} places
no structural restrictions on the likelihood or the design space,
the estimators and gradient expressions developed below
specialize to the weighted Gaussian likelihood and simplex design
parameterization of \secref{sec:linear_gaussian_running_example}; analogous derivations
for other Gaussian observation models or for continuous design
spaces are possible but not pursued here.

\subsection{Double-Loop Estimator}
\label{sec:computation:quadrature}

Evaluating the risk-aware objective requires a quadrature rule at
each of its three levels.
We write each with general nodes and weights, so that any
quadrature strategy, e.g., Monte Carlo, quasi-Monte Carlo, sparse
grids etc., can be substituted without modifying the algorithm.
We focus on Levels~1 and~3, since the Level-2 rule is simply the
application of the practitioner-supplied nodes and weights as a
weighted sum over the prediction domain.

\paragraph{Outer quadrature: data realizations.}
To generate likely data $\data \sim (\data | \design)$, we generate plausible datasets from the model itself.
Specifically, we draw parameters from the prior, push them through
the forward model, and add measurement noise from the weighted
Gaussian observation likelihood of \secref{sec:bayes_inverse}.
In this model the likelihood is determined by the noise covariance,
and the design enters by allocating measurement budget across the
candidates.
The baseline covariance is diagonal,
$\noisecov = \diag(\sigma_{\varepsilon,1}^2,\ldots,\sigma_{\varepsilon,\nobs}^2)$,
and the weight $\designk{k}$ scales the precision at candidate $k$,
so the effective covariance is
$\effnoisecov = \diag(\sigma_{\varepsilon,k}^2/\designk{k})$,
as in the running example~\eqnref{eq:obs_model}.
The resulting log-likelihood is
\begin{equation}
  \label{eq:likelihood}
  \log \likeli
  \propto
  -\frac{1}{2}
  \sum_{k=1}^{\nobs}
  \frac{\designk{k}}{\sigma_{\varepsilon,k}^2}
  \bigl(y_k - \obsmap_k(\params,\design)\bigr)^2.
\end{equation}
A larger weight $\designk{k}$ sharpens the $k$-th term and
$\designk{k}=0$ removes candidate $k$ from the experiment.

To enable gradient-based optimization of the design, we generate
the data with a reparameterization trick.
We take outer quadrature nodes and weights
$\{(\outersamp{m}, \outwt{m})\}_{m=1}^{\nout}$, drawing each
$\outersamp{m}$ as a joint sample from $p(\params, \noisevec)$, and
form each data realization as
\begin{equation}
  \label{eq:reparam}
  \datam{m}_k
  = \obsmap_k(\outersamp{m}, \design)
  + \frac{\sigma_{\varepsilon,k}}{\sqrt{\designk{k}}}\,\noisem{m}_k,
  \qquad k = 1,\ldots,\nobs,
\end{equation}
holding $\noisem{m} \sim \Gaussian{\vv{0}}{\mm{I}}$ fixed as
$\design$ varies.
This moves the design dependence out of the sampling distribution
and into a deterministic transformation: once $\outersamp{m}$ and
$\noisem{m}$ are fixed, $\datam{m}(\design)$ is a differentiable
function of $\design$, enabling gradient computation.
In subsequent expressions, we suppress the $\design$ argument and
write simply $\datam{m}$, restoring it only when the design
dependence is the focus (as in \secref{sec:computation:gradients}).

\paragraph{Inner quadrature: posterior via importance weights.}
The level-1 deviation is an expectation under the posterior
$p(\params\mid\datam{m},\design)$, which differs for every outer
dataset $\datam{m}$.
Resampling the posterior from scratch for each $\datam{m}$ would
make the inner cost scale with the number of outer samples, the
expense that makes naive double-loop estimators prohibitive.
To avoid his, we draw a single set of prior samples and
re-weight them toward each dataset's posterior with importance
weights, so the same inner samples can be leveraged across all outer data samples.

To compute the level-1 deviation of a QoI component $q_j(\params)$
(\secref{sec:goal_bayes}), we draw inner quadrature nodes and
weights $\{(\innersamp{n}, \inwt{n})\}_{n=1}^{\nin}$ from the prior
$\prior$.
We use these to estimate both the normalizing constant (the
evidence) and moments of $q_j$ against the posterior.
For example, expressed as prior expectations via self-normalized
importance sampling, the posterior standard deviation is
\begin{equation}
  \label{eq:stddev_unnormalized}
  \Stdunder{\post}{q_j}
  = \left[
      \frac{\Eunder{\prior(\params)}{q_j^2(\params)\, p(\datam{m}\mid\params,\design)}}{p(\datam{m}\mid\design)}
    - \left(
        \frac{\Eunder{\prior(\params)}{q_j(\params)\, p(\datam{m}\mid\params,\design)}}{p(\datam{m}\mid\design)}
      \right)^{\!2}
    \right]^{1/2}.
\end{equation}
We compute the numerator moments and the evidence
$p(\datam{m}\mid\design)$ in the denominator with the \emph{same}
quadrature rule $\{(\innersamp{n}, \inwt{n})\}$.
This collapses~\eqnref{eq:stddev_unnormalized} into a ratio of
sums governed by self-normalized importance weights,
\begin{equation}
  \label{eq:impweights}
  \impwt{m}{n}
  = \frac{
      \inwt{n}\, p(\datam{m} \mid \innersamp{n}, \design)
    }{
      \displaystyle\sum_{n'=1}^{\nin}
      \inwt{n'}\, p(\datam{m} \mid \innersamp{n'}, \design)
    },
\end{equation}
and turns a nominally $\nin \times \nout$ sampling problem into
an $\nin + \nout$ one.
The cost is that the numerator and denominator estimates are now
correlated --- making the ratio a biased estimator of the true
posterior variance --- but the bias vanishes as $\nin \to \infty$
and the estimator remains asymptotically consistent for all
risk-aware utilities discussed in this paper.

\subsection{The Estimation Algorithm}
\label{sec:computation:algorithm}

\Algref{alg:estimator} assembles the outer data realizations and
the inner importance weights into the design-objective estimate:
it loops over data realizations, reuses the shared inner rule to
compute posterior deviations for each QoI component, and aggregates
them first across components (level~2) and finally across data
(level~3).
The operators $\devmeas$, $\qoirisk$, and $\datarisk$
corresponding to these levels are each realized by substituting
an expression from
the families of \secref{sec:utility:framework}, and the level-2 aggregation weights the QoI components by the
prediction weights $\{\gamma_j\}_{j=1}^{\nqoi}$ introduced in
\secref{sec:utility:framework} as the discretization of $\mu$.
Throughout the algorithm, the notation $\{\hat{d}_j^{(m)}, \gamma_j\}_j$
denotes an empirical weighted distribution with atoms
$\hat{d}_j^{(m)}$ and weights $\gamma_j$; similarly
$\{\hat{r}^{(m)}, \outwt{m}\}_m$ at level~3.
Each risk operator $\qoirisk$ and $\datarisk$ consumes exactly
this representation and can be swapped without touching any other
part of the algorithm.
Because all outer and inner samples are generated once and held
fixed across design evaluations, the algorithm returns a
deterministic function of $\design$, a property we exploit when
computing gradients in \secref{sec:computation:gradients}.
 
\begin{algorithm}[t]
  \SetAlgoLined
  \KwIn{%
    Design $\design \in \simplex$;
    outer quadrature
    $\{(\outersamp{m}, \noisem{m}, \outwt{m})\}_{m=1}^{\nout}$
    with $\outersamp{m}\in\R^{\nparams}$,
    $\noisem{m}\in\R^{\nobs}$;
    inner quadrature
    $\{(\innersamp{n}, \inwt{n})\}_{n=1}^{\nin}$ with
    $\innersamp{n}\in\R^{\nparams}$;
    prediction weights $\{\gamma_j\}_{j=1}^{\nqoi}$;
    level choices $\devmeas$ (deviation),
    $\qoirisk$ (level~2),
    $\datarisk$ (level~3).%
  }
  \KwOut{Design-objective estimate $\hat{U}(\design)$.}
  \BlankLine
  \tcp{Data generation (outer samples).}
  Evaluate forward model:
    $\mathbf{G}_{:,m} \leftarrow \obsmap(\outersamp{m}, \design)$
    for $m = 1,\ldots,\nout$
    \tcp*{$\mathbf{G}\in\R^{\nobs\times\nout}$}
  Noise scaling (diagonal):
    $\mathbf{E}_{k,m} \leftarrow
    (\sigma_{\varepsilon,k}/\sqrt{\designk{k}})\,\noisem{m}_k$
    for all $k,m$
    \tcp*{$\mathbf{E}\in\R^{\nobs\times\nout}$}
  Data matrix:
    $\mathbf{Y} \leftarrow \mathbf{G} + \mathbf{E}$;
    set $\datam{m} \leftarrow \mathbf{Y}_{:,m}$\;
  \BlankLine
  \tcp{Prior-space precomputation (reusable across designs).}
  Evaluate QoI:
    $\mathbf{Q}_{j,n} \leftarrow q_j(\innersamp{n})$ for all $j,n$
    \tcp{$\mathbf{Q}\in\R^{\nqoi\times\nin}$}
  \BlankLine
  \tcp{Outer loop.}
  \For{$m = 1, \ldots, \nout$}{
    Log-likelihoods:
      $\ell_{mn} \leftarrow \log p(\datam{m}\mid\innersamp{n},\design)$
      for $n=1,\ldots,\nin$\;
    Evidence estimate:
      $\hat{p}(\datam{m}\mid\design) \leftarrow
      \sum_{n=1}^{\nin} \inwt{n}\, e^{\ell_{mn}}$\;
    Importance weights:
      $\impwt{m}{n} \leftarrow
      \inwt{n}\, e^{\ell_{mn}} / \hat{p}(\datam{m}\mid\design)$
      \tcp*{shared inner rule}
    Deviations:
      $\hat{d}_j^{(m)} \leftarrow
      \devmeas\!\left[\{\impwt{m}{n}, \mathbf{Q}_{j,n}\}_n\right]$
      for $j=1,\ldots,\nqoi$\;
    Level-2 aggregation:
      $\hat{r}^{(m)} \leftarrow
      \qoirisk\!\left[\{\hat{d}_j^{(m)}, \gamma_j\}_j\right]$\;
  }
  \BlankLine
  Level-3 aggregation:
    $\hat{U} \leftarrow
    \datarisk\!\left[\{\hat{r}^{(m)}, \outwt{m}\}_m\right]$\;
  \Return{$\hat{U}$}
  \caption{Three-level risk-aware goal-oriented OED design objective.}
  \label{alg:estimator}
\end{algorithm}
 
Continuing the running example, in which the deviation measure is
the standard deviation $\devmeas = \mathrm{Std}$, substituting the
importance weights $\impwt{m}{n}$
into~\eqnref{eq:stddev_unnormalized} absorbs the evidence factors
and yields the computable per-component form
\begin{equation}
  \label{eq:inner_std}
  \hat{d}_j^{(m)}(\design)
  = \left[
      \sum_{n=1}^{\nin} \impwt{m}{n}\, q_j(\innersamp{n})^2
      -
      \left(\sum_{n=1}^{\nin} \impwt{m}{n}\, q_j(\innersamp{n})\right)^{\!2}
    \right]^{1/2},
\end{equation}
which reuses the shared weights $\impwt{m}{n}$ and cached QoI
values $q_j(\innersamp{n})$ for every component, requiring no
additional forward-model evaluations.
The evidence $\hat{p}(\datam{m}\mid\design)$ is retained as an
intermediate quantity because it is reused in the gradient
of~\secref{sec:computation:gradients}.

The design of \Algref{alg:estimator} makes changing the
practitioner's risk preference straightforward.
\iflongappendix
The $\mathrm{AVaR}$ operator is the one exception: it is non-smooth
on the raw samples and is replaced by a smoothed surrogate
$\widetilde{\mathrm{AVaR}}_{\alpha,\delta}$, obtained by adding a
penalty to the dual representation of $\mathrm{AVaR}$ \cite{kouri2020epi,kouri2022primal} and
recovering $\mathrm{AVaR}_\alpha$ as $\delta\to\infty$
(\appref{app:smooth-avar}); all other risk measures are directly
differentiable in the empirical weights.
\else
The $\mathrm{AVaR}$ operator is the one exception: it is non-smooth
on the raw samples and is replaced by a smoothed surrogate
$\widetilde{\mathrm{AVaR}}_{\alpha,\delta}$, obtained by adding a
penalty to the dual representation of $\mathrm{AVaR}$ \cite{kouri2020epi,kouri2022primal} and
recovering $\mathrm{AVaR}_\alpha$ as $\delta\to\infty$; all other
risk measures are directly differentiable in the empirical
weights.
\fi
\iflongappendix
\appref{app:convergence} provides a study verifying convergence of
the nested-quadrature estimator
against the closed-form design-objective expressions of
\appref{app:scalar}.
\else
\appref{app:convergence} provides a study verifying convergence of
the nested-quadrature estimator against closed-form
design-objective expressions derived from the analytical
properties of the Gaussian posterior and its lognormal
push-forward.
\fi

\subsection{Gradient Computation and Optimization}
\label{sec:computation:gradients}

Because all outer and inner samples are drawn once and held fixed
across design evaluations, \Algref{alg:estimator} defines a
deterministic approximation $\hat{U}(\design)$ of the true design
objective $U(\design)$.
Specifically, the algorithm returns a sample-average approximation
(SAA) in which $\design$ enters only through the reparameterized
data~\eqnref{eq:reparam} and the importance
weights~\eqnref{eq:impweights}.
Both are differentiable in $\design$, so $\hat{U}$ is differentiable
too, everywhere except at the non-smooth indicator inside
$\mathrm{AVaR}$, which we smooth as described below.
The design problem therefore reduces to the continuous, smooth
optimization
\begin{equation}
  \label{eq:saa_problem}
  \hat{\design}^\star
  = \argmin_{\design \in \simplex} \hat{U}(\design),
\end{equation}
which we solve with \texttt{Scipy}'s trust-region interior-point algorithm.  
Solving for $\design$ with gradients in this way avoids the
combinatorial search over candidate sensor sets used by some OED methods.
Also, note that \eqref{eq:reparam} is singular at $\design_k=0$, so we enforce $\design_k>10^{-6},\,\forall k $ during optimization.

The gradient $\nabla_{\design}\hat{U}$ is available by automatic
differentiation through \Algref{alg:estimator}.
This is the simplest route and the one we recommend for
prototyping, but each optimizer step then traverses the full
computation graph of the $\nout \times \nin$ double sum.
The gradient also admits a closed form, because $\design$ enters
the estimator in only two places: the reparameterized
data~\eqnref{eq:reparam} and the importance
weights~\eqnref{eq:impweights}.
This closed form reuses quantities already cached in the forward
pass, in particular the log-likelihoods $\ell_{mn}$, so it needs no
further forward-model evaluations and costs a constant multiple of
the objective.
\iflongappendix
\appref{app:gradients} derives this closed-form gradient, which we
use in our experiments, together with the smoothed-$\mathrm{AVaR}$
surrogate $\widetilde{\mathrm{AVaR}}_{\alpha,\delta}$, whose bias
vanishes as $\delta\to\infty$.
\else
We use this closed-form gradient in our experiments, together
with the smoothed-$\mathrm{AVaR}$ surrogate
$\widetilde{\mathrm{AVaR}}_{\alpha,\delta}$, whose bias vanishes
as $\delta\to\infty$.
\fi

\section{Numerical Experiments}
\label{sec:experiments}

This section demonstrates two consequences of the framework.
First, a single choice of risk aggregator continuously
interpolates between $I$- and $G$-optimality.
Second, in a realistic PDE-based inverse problem, the resulting designs differ
qualitatively from one another and from the KL baseline.
When the observation geometry and the QoI structure are
misaligned, parameter-informative and prediction-informative
sensors often differ, so a design optimized for parameter
estimation may perform no better than random designs in terms of improving prediction uncertainty.

Although the three-level construction~\eqnref{eq:utility:full}
is general, each experiment exercises only a subset of the
design objectives listed in~\tabref{tab:named_utilities}, chosen to
reflect a range of risk preferences: from risk-neutral
($U_1$: std / mean / mean), through QoI-level and data-level
risk aversion ($U_2$--$U_4$), to a level-1 variation that
replaces standard deviation with an asymmetric deviation measure
($U_5$: entropic / mean / mean); the safety-margin variant $U_6$
(std / mean / mean$\,+\,c\,\cdot\,$Std) appears only in the
estimator-convergence study of the appendix.
The expected information gain (KL) is included as a
parameter-focused baseline; KL is not an instance
of~\eqnref{eq:utility:full} as it targets the full parameter
posterior rather than the push-forward, but it serves as a
point of comparison throughout.
 
\begin{table}[htb]
  \caption{%
    Named design objectives used in this paper.
    Each design objective $U_i$ corresponds to a specific choice of
    deviation ($\devmeas$), QoI risk ($\qoirisk$), and data risk
    ($\datarisk$) within the three-level
    framework~\eqnref{eq:utility:full}.
    The KL design objective (negative expected information gain) is listed for
    comparison but is not an instance of the framework.%
  }
  \label{tab:named_utilities}
  \centering
  \begin{tabular}{llll}
    \toprule
    Label & Level~1 ($\devmeas$) & Level~2 ($\qoirisk$) & Level~3 ($\datarisk$) \\
    \midrule
    $U_1$ & std      & mean                   & mean \\
    $U_2$ & std      & $\mathrm{AVaR}_\alpha$ & mean \\
    $U_3$ & std      & mean                   & $\mathrm{AVaR}_\beta$ \\
    $U_4$ & std      & $\mathrm{AVaR}_\alpha$ & $\mathrm{AVaR}_\beta$ \\
    $U_5$ & entropic & mean                   & mean \\
    $U_6$ & std      & mean                   & mean $+\,c\,\cdot\,$Std \\
    \midrule
    KL    & \multicolumn{3}{l}{expected information gain (parameter-focused baseline)} \\
    \bottomrule
  \end{tabular}
\end{table}

\subsection{$I$--$G$ Interpolation}
\label{sec:experiments:ig}
 
The following experiment demonstrates that the level-2 AVaR
parameter $\alpha$ provides a continuous interpolation between the
classical $I$-optimal and $G$-optimal designs, using the
$U_2$ design objective (std / $\mathrm{AVaR}_\alpha$ / mean; see
\tabref{tab:named_utilities}). 
To give the level-2 aggregation a multivariate target, we use
the lognormal running example of
\secref{sec:linear_gaussian_running_example}, extending its
QoI from a scalar to a vector by evaluating the
QoI map~\eqnref{eq:lognormal_qoi} at $\nqoi = 100$ prediction
locations placed equidistantly on $[-2/3,\, 2/3]$.
The observation model, basis, prior, and noise covariance are
specified in \appref{app:convergence}.
Weighting the prediction locations uniformly at level~2,
$\gamma_j = 1/\nqoi$, makes the two endpoints of the AVaR sweep
coincide with the classical criteria:
at $\alpha = 0$, $\mathrm{AVaR}_0$ reduces to the expectation, so
the objective averages posterior deviation across the prediction
domain and recovers $I$-optimality, while as $\alpha \to 1$ (attained here at $\alpha_{\max} = 0.95$) $\mathrm{AVaR}_\alpha$ approaches the maximum over QoI components and approximates $G$-optimality.
Sweeping $\alpha$ between these endpoints therefore traces a path from average-case to worst-case design. 
Moreover, the resulting objective is
continuous in $\alpha\in[0,1)$ when using either exact AVaR or the smoothed AVaR used in our numerical results.
These numerical results are supported theoretically.
\iflongappendix
First, in \appref{sec:app_smoothed_avar} we establish continuity of the smoothed $\mathrm{AVaR}_\alpha$ functional in $\alpha\in[0,1)$. Second, since level~3 is an expectation over data, this continuity (under a mild integrability assumption on the level-2 outputs) carries through to continuity of the full design objective in $\alpha$.
\else
First, the smoothed $\mathrm{AVaR}_\alpha$ functional is continuous in $\alpha\in[0,1)$, which follows from the finiteness, concavity, and monotonicity of its dual value function. Second, since level~3 is an expectation over data, this continuity (under a mild integrability assumption on the level-2 outputs) carries through to continuity of the full design objective in $\alpha$.
\fi
This argument extends directly to classical
$\mathrm{AVaR}_\alpha$, whose continuity in $\alpha\in[0,1)$ follows from analogous arguments as those in \appref{sec:app_smoothed_avar}.

For each $\alpha$, we solve the design problem
\begin{equation}
  \label{eq:ig:opt}
  \design^\star(\alpha)
  = \argmin_{\design \in \simplex}\,
    U_\alpha(\design),
\end{equation}
where $U_\alpha$ is the $U_2$ design objective evaluated at AVaR
level $\alpha$.
The sweep follows an ascending continuation in $\alpha$, applied
identically to the MC and exact optimizations: the smallest level
is initialized from the uniform design and each subsequent level
from the optimal design at the previous grid point.
The sweep uses $20$ equally spaced levels
$\alpha \in [0, \alpha_{\max}]$; the MC optimization uses
$\nout = \nin = 4000$ samples, while the exact optimization
evaluates the analytical objective with $10^4$ outer samples.
At each optimal design $\design^\star(\alpha)$, we evaluate three
objectives to test for continuity:
(i)~the matched objective $U_\alpha(\design^\star(\alpha))$,
evaluated at the same AVaR level the design was optimized for;
(ii)~the $I$-optimal objective $U_0(\design^\star(\alpha))$; and
(iii)~the $G$-optimal proxy
$U_{\alpha_{\max}}(\design^\star(\alpha))$.
By construction the matched and $I$-optimal curves must coincide at
$\alpha = 0$ and the matched and $G$-optimal curves at
$\alpha = \alpha_{\max}$.

\figref{fig:ig:sweep} shows how curves (i)-(iii) behave for varying values of $\alpha$.
As expected, $U_\alpha(\design^\star(\alpha))$ corresponds to the $I$-optimal objective at $\alpha = 0$ and
corresponds to the $G$-optimal proxy for $\alpha = \alpha_{\max}$.
For $0<\alpha<1$, $U_\alpha(\design^\star(\alpha))$ monotonically increases, providing a continuous nonlinear interpolation between 
the two classical optima.
Although the objective varies smoothly, the underlying design
does not.
The right panel of \figref{fig:ig:sweep} shows the optimal
weights differing significantly across the $\alpha$ sweep.
Sensor~9 provides a nonzero weight for all values of $\alpha$, while all other sensors have weights that are zero for some values of $\alpha$.
For most sensor locations, the weight varies (both continuously and discontinuously) across the $\alpha$ range,
highlighting how risk preferences can have substantive impacts on the resulting designs.
The discontinuous jumps in the weights arise because the design
objective admits multiple local minima with nearly identical
objective values but distinct weight configurations; as $\alpha$
crosses a threshold, the minimizer moves from one such basin to
another, so the design changes abruptly even though the optimal
objective value itself varies continuously.

 \begin{figure}[htb]
  \centering
  \includegraphics[width=\textwidth]{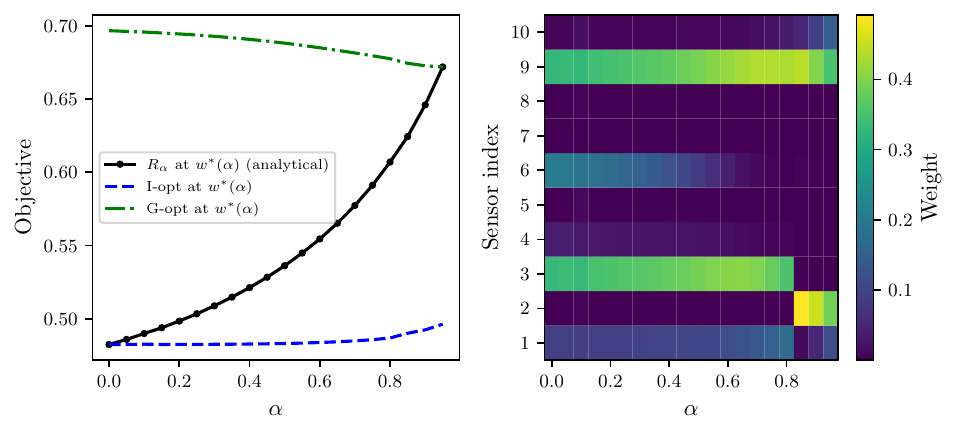}
  \caption{%
    $I$--$G$ optimality interpolation on the running examples given in~\Cref{sec:linear_gaussian_running_example}.
    \textit{Left:} design-objective values at the optimal design
    $\design^\star(\alpha)$ as a function of $\alpha$.
    Red markers: Monte Carlo estimate of the matched objective
    $U_\alpha(\design^\star(\alpha))$.
    Blue dashed: $I$-optimal objective
    $U_0(\design^\star(\alpha))$.
    Green dash-dotted: $G$-optimal proxy
    $U_{\alpha_{\max}}(\design^\star(\alpha))$.
    \textit{Right:} optimal sensor ($y$-axis) weights as a function of
    $\alpha$.
  }
  \label{fig:ig:sweep}
\end{figure}

\subsection{The Advection--Diffusion Test Problem}
\label{sec:experiments:advdiff:setup}

The remaining experiments all use a single PDE-constrained
sensor-placement problem, representative of surface-water quality
monitoring, which we describe here before turning to results.
The goal of the experiments is to predict the contaminant
concentration that will reach a set of downstream receptors when
a contaminant of unknown spatial distribution is released
upstream in a channel and transported toward them, by placing a
limited number of sensors so as to inform that prediction.

\paragraph{Forward model.}
The contaminant concentration $c(\vv{x}, t; \params)$ is
transported by advection and diffusion, evolving on the unit
square $\Omega = [0,1]^2$ according to
\begin{equation}
  \label{eq:advdiff}
  \partial_t c + \vv{u}\cdot\nabla c
  - D\,\Delta c = s(\vv{x}; \params),
\end{equation}
with source $s(\vv{x};\params)$, diffusivity $D$, Robin
conditions $D\,\partial_n c + 0.1\,c = 0$ on the left and right
boundaries, zero-flux Neumann conditions elsewhere, and zero
initial condition; time integration uses Crank--Nicolson with
$\Delta t = 0.25$ up to the final time $T = 4$.
We use an advection-dominated regime $D = 0.005$, in which the
tracer follows narrow corridors set by the velocity field.
The domain, depicted in \figref{fig:advdiff:domain}, contains
three rectangular obstructions, and the steady velocity field
$\vv{u}(\vv{x})$ that carries the contaminant from left to
right is obtained once by solving the incompressible
Navier--Stokes equations with Taylor--Hood P2/P1 finite elements
(inlet profile $u_x = [x_2(1 - x_2)]^{3/2}$, no-slip walls and
obstructions, stress-free outlet, kinematic viscosity
$\nu = 1/\mathrm{Re} = 0.08$ with $\mathrm{Re} = 12.5$).
The obstructions create recirculation zones and wakes, so the
transport pathways from the source region to the downstream
receptors are non-trivial; the resulting velocity field is shown
in the right panel of \figref{fig:advdiff:domain}.

\paragraph{Bayesian inverse problem.}
In the following experiments, we treat the source as uncertain and
infer it from sensor data.
We model the source as a lognormal random field
$s(\vv{x};\params) =
\exp(\sum_{k=1}^{\nparams}\sqrt{\lambda_k}\,\phi_k(\vv{x})\,
\params_k)$ from a truncated Karhunen--Lo\`{e}ve expansion of a
squared-exponential Gaussian process ($\nparams = 10$ retained
terms, correlation length $0.1$, standard deviation $0.3$),
supported on the left strip
$[0, 0.25] \times [0, 1]$ and zero elsewhere.
We assign an independent standard Gaussian prior to the
coefficients $\params \in \R^{\nparams}$; a representative
realization of the resulting source field is shown in the left
panel of \figref{fig:advdiff:domain}.
We then apply Bayes' rule with the likelihood of
\secref{sec:bayes_inverse} to update this prior to a posterior
over $\params$, using measurements of the tracer concentration
at the sensor locations of \figref{fig:advdiff:sensors} and the
forward model~\eqnref{eq:advdiff} to predict those measurements;
the measurements are corrupted by i.i.d.\ Gaussian noise with
standard deviation $0.1$.

\paragraph{Goal-oriented prediction.}
Bayesian inference reduces the uncertainty in the source
parameters $\params$, but the effect of this reduction on the
downstream prediction must be quantified.
We therefore push the posterior forward through a QoI map that
returns the concentration at final time $T = 4$ at the $\nqoi$
receptor locations of \figref{fig:advdiff:sensors},
$q_j(\params) = c(\vv{x}_j^{\mathrm{pred}}, T; \params)$,
placed downstream of the obstructions ($x_1 \geq 5/7$), and the
design seeks to reduce the uncertainty in this push-forward.

\paragraph{Design problem.}
Both the parameter-focused and the goal-oriented solutions depend
on the data collected, so the placement of the sensors determines
how informative the inference and prediction can be.
We therefore use the OED formulations of this paper to optimize
the measurement weights over the $\nobs$ candidate sensor
locations of \figref{fig:advdiff:sensors}.
The candidate locations are a space-filling subset of the mesh
chosen by a maximin criterion.
To determine the optimal weights $\design \in \simplex$ we
solve the design problem~\eqnref{eq:saa_problem} over the simplex
using the estimator of \secref{sec:computation}.

To solve the OED problems, the steady Navier--Stokes velocity is
computed once at the outset and reused for all prior samples, so
a per-sample forward solve involves only a transient
advection--diffusion solve driven by the sampled source field.
Each such solve is nevertheless expensive.
The outer and inner samples of \Algref{alg:estimator} are,
however, independent of $\design$
(\secref{sec:computation:algorithm}).
We therefore pre-generate a cached dataset of $(\nout + \nin)$
concentration-field samples in parallel, then reuse it for every
design optimization and cross-evaluation in the following
subsections without additional PDE solves.
With this problem setup, we can examine how risk preference shapes
the optimal design and the implications of optimizing an objective that does
not adequately reflect one's preferences.

\begin{figure}[!htbp]
  \centering
  \includegraphics[width=\textwidth]{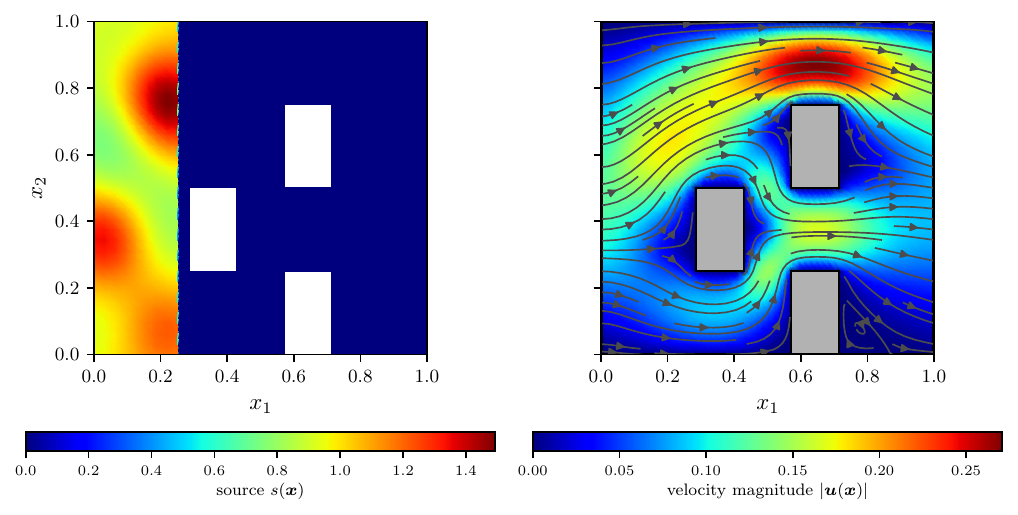}
  \caption{%
    Advection--diffusion benchmark setup on the unit square with
    three rectangular obstructions.
    \textit{Left:} a representative realization of the lognormal
    source field $s(\vv{x}; \params)$, supported on the left
    strip $[0, 0.25] \times [0, 1]$ and zero elsewhere.
    \textit{Right:} steady Navier--Stokes velocity field
    $\vv{u}(\vv{x})$ as streamlines overlaid on a contour plot
    of velocity magnitude $|\vv{u}|$, with recirculation zones
    visible around the grey obstructions.
  }
  \label{fig:advdiff:domain}
\end{figure}

\begin{figure}[!htbp]
  \centering
  \includegraphics[width=\textwidth]{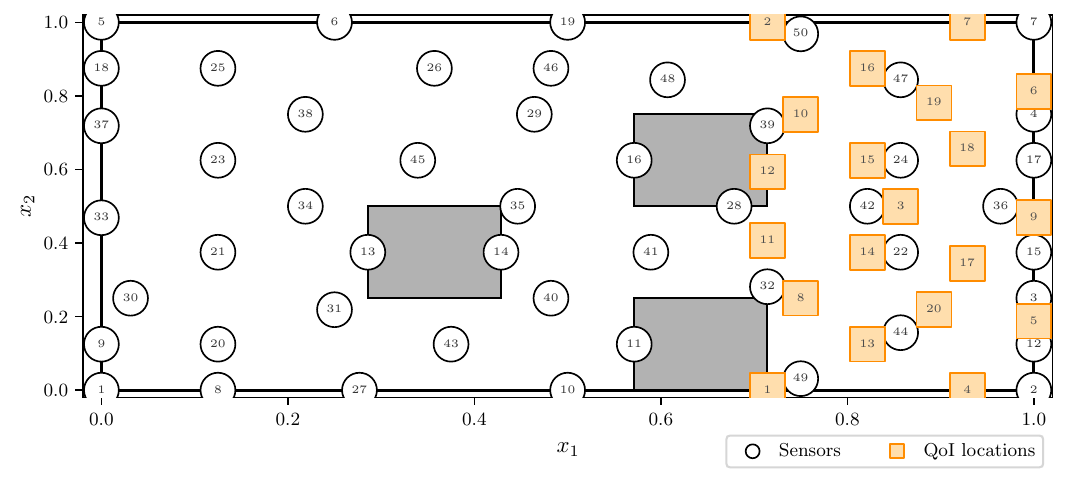}
  \caption{%
    Layout of the $\nobs = 50$ candidate sensor locations
    (white circles, numbered) and the $\nqoi = 20$ QoI
    receptor locations (orange squares, numbered) on the unit
    square; the three grey boxes are the flow obstructions.
    Sensor candidates are maximin-distributed interior mesh
    points avoiding obstruction boundaries; receptors are
    maximin-distributed within the downstream strip
    $x_1 \geq 5/7$, disjoint from sensor candidates.%
    The x-axis was stretched to ensure the sensor labels can be read. 
  }
  \label{fig:advdiff:sensors}
\end{figure}

\subsection{Risk Preference Shapes the Design}
\label{sec:experiments:advdiff:designs}

This subsection investigates the stability of optimal designs 
corresponding to various risk preferences.
Specifically, we optimize the five goal-oriented design
objectives $U_1$--$U_5$ against the parameter-focused KL
baseline given in~\Cref{tab:named_utilities}, using SciPy's
\texttt{trust-constr} algorithm with a linear simplex
constraint, convergence tolerance $\mathrm{gtol} = 10^{-8}$, and
the uniform allocation $\design = \vv{1}/\nobs$ as the starting
point.
All design objectives are estimated from the same cached dataset
with $\nout = \nin = 4000$ outer and inner samples; the AVaR
levels at the QoI and data levels are $\alpha = \beta = 0.9$,
and the entropic deviation of $U_5$ uses $\lambda = 0.9$.

To gauge how sensitive each optimized design is to the
particular simulation dataset used to construct it, we repeat the
optimization on $30$ independent bootstrap resamples of the
cached dataset (drawn with replacement at size
$(\nout, \nin)$).
The resulting $30$ designs form an empirical distribution of the
optimizer $\design^\star$ as a function of training data, which
we summarize as a sensor-wise boxplot in~\figref{fig:advdiff:boxplot}.
Here, for readability we restrict presentation to the five candidates whose mean
optimized weight exceeds a display cutoff of $0.145$ under at
least one design objective (the other $45$ candidates stay below
this cutoff under every objective and are omitted; the cutoff is
a plotting choice with no statistical meaning).
Every design objective, including the KL baseline,
concentrates almost all of its weight on at most five of the
$\nobs = 50$ candidates.
We expect the sparsity observed here is induced by a mechanism
similar to that present in asymptotic frequentist
OED~\cite{silvey1980optimal}.
Specifically, Carath\'{e}odory's theorem applied to the convex
set of positive semi-definite Fisher information matrices bounds
the support size of any optimal design measure to at most
$\tfrac{1}{2}\nparams(\nparams+1)$ atoms. 
Moreover, in practice only $\nparams$ nonzero weights suffice when the candidate set spans
$\R^{\nparams}$.

The design objectives differ in how tightly they concentrate within
this sparse support and how they split weight across the five selected locations, 
assigning the largest weights to different sensors.
This is seen in~\figref{fig:advdiff:boxplot}, where based on the median weight
across the bootstrap samples, the dominant choice 
for $U_1$, $U_2$, $U_3$, $U_4$, and $U_5$ 
Sensor~26, located above the first obstacle on the upper
branch of the flow just downstream of the source strip (see
\figref{fig:advdiff:sensors}).
However, $U_1$ instead places its largest weight on sensor~38.
The gap between the most and least weight assigned to sensor~26
by the goal-oriented design objectives exceeds $0.4$.
Thus, objective choice has first-order consequences for where
the primary measurement is sited, not just for how residual
budget is distributed.
The one exception to a clear top-sensor preference is $U_5$,
whose bootstrap box at sensor~38 ($\approx 0.34$ to $0.40$)
overlaps with $U_1$'s box at the same sensor and whose
sensor~26 weight ($\approx 0.41$) is only marginally larger.
For $U_5$ the choice between sensors~26 and~38 is genuinely
uncertain at this sample size.

Design objectives also disagree on which secondary sensors to activate.
At sensor~6, $U_1$, $U_2$, and $U_5$ place weight well above
the uniform level $1/\nobs$, while $U_3$, $U_4$, and KL fall
at or below it: some goal-oriented design objectives activate sensor~6
while others leave it effectively inactive.
A similar, smaller asymmetry is visible at sensor~34.
The activation pattern tracks tail-aversion: the strongly
tail-averse $U_3$ and $U_4$ concentrate their weight most
narrowly.
The bootstrap boxes are narrow relative to these inter-objective
gaps, so the differences are not attributable to training-set
sampling variability.

Read as a set, the five goal-oriented design objectives concentrate their
weight on sensors~26 and~38 and place only small weight
elsewhere.
The KL baseline departs from this pattern.
Its weight spreads across four sensors $[38, \hspace{0.1pt} 8, \hspace{0.1pt} 6, \hspace{0.1pt} 34]$ each receiving median weight in the range
$[0.15, 0.22]$ (\figref{fig:advdiff:boxplot}).
Among these, only sensor~$38$ overlaps with the primary
goal-oriented selection; sensors~$8$, $6$, and $34$ all sit
below the uniform line for $U_3$ and $U_4$ (the most
tail-averse goal-oriented utilities) and at or just above it
for the others.
Conversely, sensor~$26$, which receives up to $\approx 0.72$
weight from the goal-oriented utilities, receives essentially
zero from KL.
The bootstrap variability of the KL design is also noticeably
smaller than that of the goal-oriented designs, with narrower
boxes at every active sensor.
This reflects that the parameter-information criterion is
estimated more stably under resampling than the tail-sensitive
quantities that drive the risk-averse designs.

The mechanism behind these observations is that KL selects
sensors where observations constrain the source
parameters, regardless of whether those parameter directions
propagate through the advection--diffusion operator to the
downstream receptors.
\iflongappendix
The parameter-informative sensors and the prediction-informative
sensors therefore coincide only when the observation geometry
happens to align the two notions of information, and in the
advection-dominated regime they do not; \appref{app:advdiff:regimes}
shows this alignment is restored as transport becomes more
diffusive.
\else
The parameter-informative sensors and the prediction-informative
sensors therefore coincide only when the observation geometry
happens to align the two notions of information, and in the
advection-dominated regime they do not; repeating the analysis at
larger diffusivities shows this alignment is restored as
transport becomes more diffusive.
\fi

\subsection{The Cost of Optimizing for the Wrong Objective}
\label{sec:experiments:advdiff:cost}

The previous subsection showed that the optimal designs differ
across the risk-aware objectives.
To quantify the impact of those differences on prediction
uncertainty, we cross-evaluate the designs.
Specifically, writing $\design^\star_j$ for the design that
minimizes design objective $U_j$, we compute
$U_i(\design^\star_j)$, which is the value of objective $U_i$ at
the optimum of objective $U_j$.

Each panel of~\figref{fig:advdiff:histogram} shows the
distribution of $U_i(\design)$ over a large sample of uniformly
random simplex designs.
We also plot the vertical markers at $U_i(\design^\star_j)$ for
all six optimized designs, so the degradation incurred when
$j \neq i$ can be read off directly.
The KL-optimal design $\design^\star_{\mathrm{KL}}$ evaluated
under any of the six goal-oriented design objectives lies close to the
favorable edge of the corresponding random-design
distribution --- comparable to the best random design, but
still roughly one histogram-width away from the corresponding
goal-oriented optimum.
The reverse cross-evaluation gives the same qualitative
picture from the KL side: the six goal-oriented designs $\design^\star_j$ 
evaluated under the KL design objective perform
nearly indistinguishably from the random allocation on the simplex.
Here, optimizing for parameters yields prediction performance
no better than a random design, while optimizing for prediction
yields parameter performance no better than a random design.
These results highlight the importance of directly targeting the inference goal
when specifying an OED objective. 

In addition to highlighting the importance of prediction-oriented strategies, 
the results of~\figref{fig:advdiff:histogram} also highlight
that the way risk is encoded into a prediction-focused objective matters.
While each goal-oriented design $\design^\star_i$ evaluated under $U_j$ ($j \neq i$)
outperforms all random designs under $U_j$,
there 
are substantial differences between these objective values and the optimum $U_i(\design^\star_i)$, 
with a gap of up to $20$--$25\%$ in some cases.
It is worth noting that these differences are smaller than the
parameter-versus-goal-oriented gap, indicating that in this case, 
directly targeting the decision-making QoI is most impactful.
This is due to the fact that all goal-oriented objectives target prediction quality, 
differing 
only in how deviation is aggregated.
Consequently, they share more structure with one another than
any of them shares with the parameter-focused KL objective, and
thus a risk-aware design tuned for one risk preference retains
some of its value under another.

While misspecifying the objective class (parameters
vs.\ prediction) has the largest impact on the design, we
believe the risk preference should always be elicited and
incorporated where possible.
When it cannot be elicited easily, or agreed upon, practitioners
should instead investigate how sensitive the design is to the
risk preference.
Sensitivity can be measured by adopting the procedure in this
section, which optimizes a design for each candidate objective
and compares them.
When the cost of running the forward model is high, this
sensitivity can be computed cheaply, because all designs are
optimized and compared on the single cached sample set without
requiring additional simulations.

Together with the bootstrap sensor-weight distributions of
\figref{fig:advdiff:boxplot}, the random-design histograms of
\figref{fig:advdiff:histogram} give a practitioner two
diagnostics --- the training-data sensitivity of the optimal
design and the parameter-versus-prediction mismatch --- both
obtained from the same cached dataset and so available before
committing to an experimental design strategy, without any forward solves
beyond those used to fit the original estimator.

The diagnostics we propose highlight the relative importance of
differences between parameter- and prediction-focused
objectives, and among the risk preferences.
\iflongappendix
These differences are, however, problem-dependent;
\appref{app:advdiff:regimes} makes this concrete.
\else
These differences are, however, problem-dependent.
\fi
Repeating the analysis as the diffusivity $D$ is increased shows
that the effect of the risk preference weakens as transport
moves from the strongly advection-dominated regime toward a
diffusion-dominated one.
The informative region broadens, the goal-oriented and KL
designs converge, and the categorical cost of misspecifying the
objective class softens to a bounded quantitative error.
\iflongappendix
The appendix also shows how to convert a design computed using
any objective in any physics regime into an integer measurement
budget, using a largest-remainder rule that preserves the weight
hierarchy of the continuous design.
\else
A design computed using any objective in any physics regime can
further be converted into an integer measurement budget using a
largest-remainder rule that preserves the weight hierarchy of
the continuous design.
\fi
This closes the gap between the continuous optima analyzed above
and the discrete allocation a practitioner ultimately deploys.

\begin{figure}[!htbp]
  \centering
  \includegraphics[width=\textwidth]{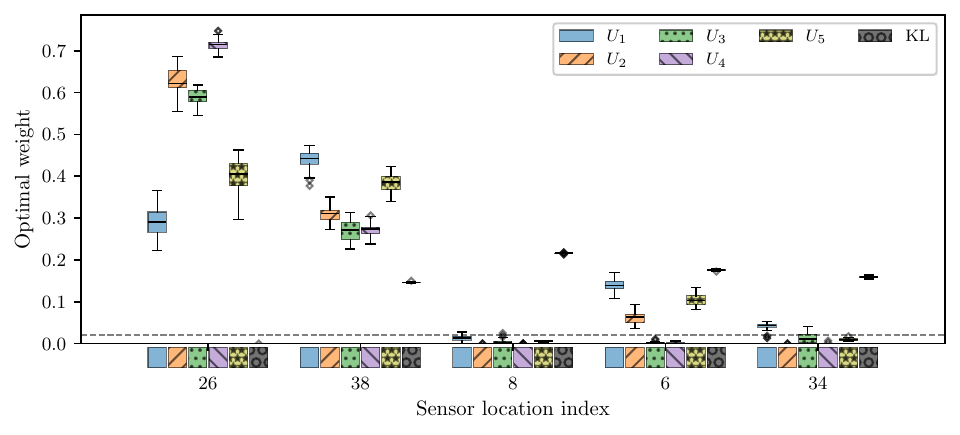}
  \caption{%
    Bootstrap distribution of optimal sensor weights in the
    advection-dominated regime, across $30$ resampled training
    sets.
    Shown are the five sensor locations whose mean weight exceeds
    a display cutoff of $0.145$ under at least one design
    objective ($45$ other candidates omitted for readability).
    Box plots grouped by sensor index (x-axis) and colored by
    design objective, with the uniform allocation $1/\nobs$ marked by
    the dashed horizontal line.%
  }
  \label{fig:advdiff:boxplot}
\end{figure}
\begin{figure}[!htbp]
  \centering
  \includegraphics[width=\textwidth]{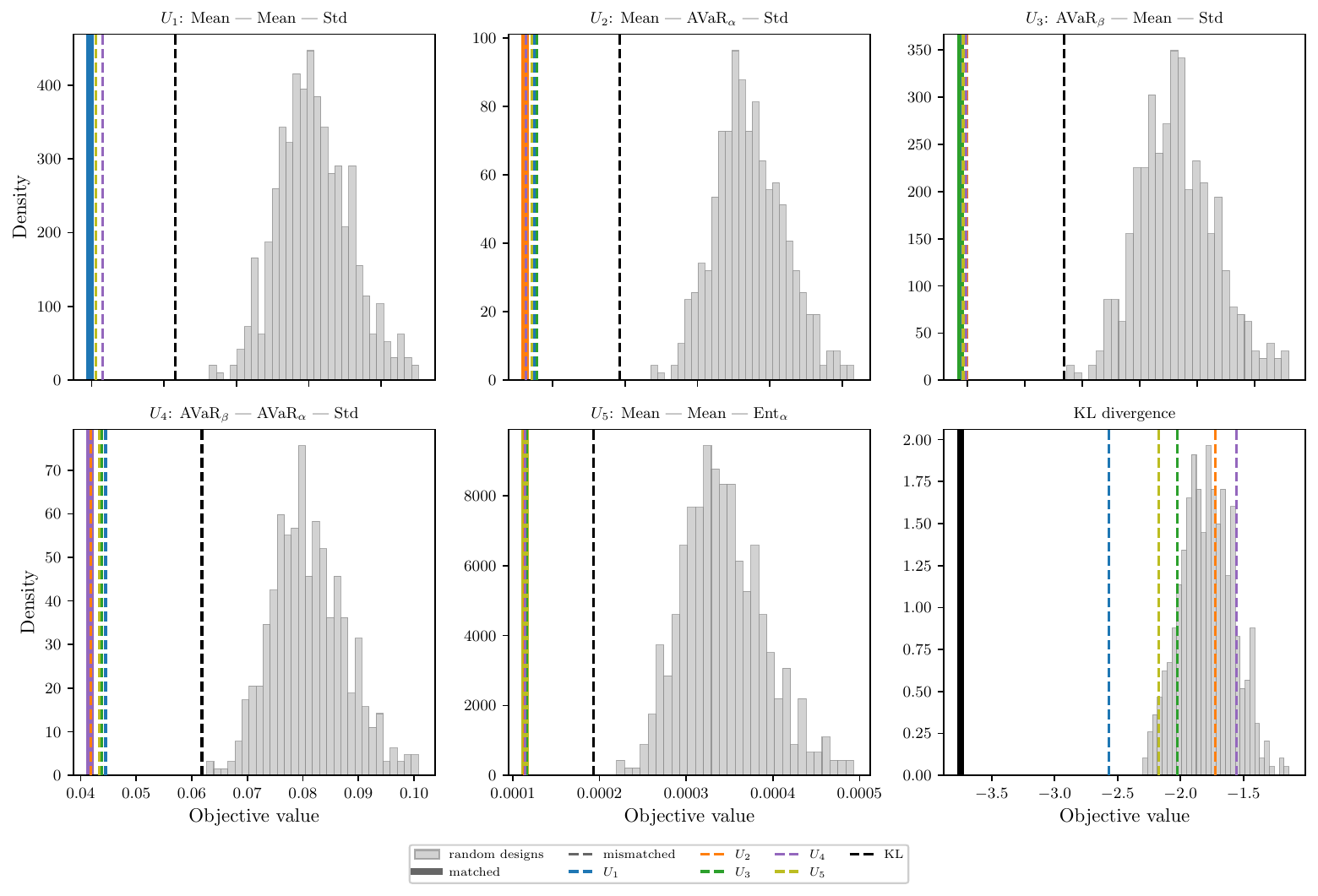}
  \caption{%
    Design-objective value histograms over randomly generated simplex
    designs, one panel per design objective, advection-dominated regime.
    Vertical markers: objective value at the six optimized
    designs $\design^\star_j$ evaluated under that panel's
    design objective, where lower is more optimal.
  }
  \label{fig:advdiff:histogram}
\end{figure}

\section{Conclusion}
\label{sec:conclusion}

This paper presents a novel framework for incorporating risk preferences
into goal-oriented Bayesian optimal experimental design paradigms.
Standard approaches aggregate predictive uncertainty
using expectations, deviations, or divergences that
implicitly encode, often risk neutral, preferences.
This can lead to suboptimal data collection, especially in goal-oriented contexts
where uncertainties regarding rare or high-consequence events may not be captured 
by optimizing for average behavior.
Our risk-aware framework leverages the risk quadrangle to construct 
a three-level OED objective separating posterior deviation, 
aggregation of vector-valued prediction quantities of interest (QoIs), and 
aggregation over the likely data.
We provide a corresponding algorithm for numerical implementation that 
builds off existing double-loop estimators 
and show how common deviation and risk measure choices---such as the standard deviation, average value-at-risk (AVaR), 
entropic risk, and safety-margin design--- fit within this framework and implementation strategy.

We demonstrate this approach on a closed-form illustrative problem as well as an advection-diffusion PDE example, 
where the observable is nonlinear in the inference parameters.
These test cases, illustrate how parameter-oriented approaches can underperform when the ultimate goal is predicting a downstream QoI, motivating the use of goal-oriented strategies.
Furthermore, we compare six goal-oriented designs reflecting differing risk preferences, 
and show that such preferences can substantially impact the resulting design:
risk-averse and risk-neutral
design objectives place their primary measurement at physically distinct
locations.
We also evaluate the sensitivity of the optimal design strategy to the choice of risk preference 
through cross-comparison of design performance in each of the six goal-oriented objectives along with 
the parameter-oriented objective.

While effective, the approach presented has two primary limitations, which result from encoding 
the design as a diagonal noise-weighting matrix 
for convenience in continuous-simplex
optimization.
The first, is that this choice relies on the
observation model being Gaussian.
Extension to non-Gaussian likelihoods requires a different
encoding of $\design$ in the likelihood, or a different posterior
estimation scheme.
The second issue is that when sensor noise is extremely large, 
this choice of parameterization results in poor scaling of the nested importance-weighted estimators beyond approximately $100$ candidate sensor locations.

Beyond addressing the two limitations above, future work should
focus on extending this work to sequential OED and on reducing
its computational cost by replacing the importance-sampling step
with amortized generative models.
Sequential strategies that update the posterior iteratively 
would fit naturally into the algorithmic framework presented here, 
as importance weights can be updated recursively.
However, both sequential and batched OED (presented here) could benefit from using 
generative models, such as those leveraging normalizing flows, to reduce cost.
Taken together, these extensions would broaden the reach of
risk-aware goal-oriented design beyond the regime the present
estimator supports to 
larger candidate sets, non-Gaussian likelihoods, and sequential
settings encountered in practice, while preserving the principled
treatment of risk preference, the importance of which is established in this work.

\section*{Acknowledgments}
This paper describes objective technical results and analysis. Any subjective
views or opinions that might be expressed in the paper do not necessarily represent
the views of the U.S. Department of Energy or the United States Government. Sandia
National Laboratories is a multimission laboratory managed and operated by National
Technology and Engineering Solutions of Sandia, LLC., a wholly owned subsidiary of
Honeywell International, Inc., for the U.S. Department of Energy’s National Nuclear
Security Administration under contract DE-NA-0003525.
This material is based upon work supported by the U.S. Department of Energy,
Office of Science, Office of Advanced Scientific Computing Research (ASCR), under the contract
24-028431, the Field Work Proposal Number 23-02526, and Early Career Research Program as well as Sandia National Laboratories' Computing and Information Sciences Laboratory Directed Research and Development program.

\section*{Declaration of Generative AI in the Writing Process}
During the preparation of this work the authors used SandiaAI - Chat and Claude (Anthropic) to assist with editing and phrasing, and verify the manuscript's consistency with the computational results. After using this tool/service, the authors reviewed and edited the content as needed and take full responsibility for the content of the publication.


\bibliography{references}


\appendix

\iflongappendix

\section{Linear Gaussian Model: Posterior and Push-Forward Statistics}
\label{app:model}

We collect the posterior and push-forward statistics for the
linear Gaussian model used in the running example and experiments.
All subsequent appendices use the notation defined here.

\subsection*{A.1\quad Model specification}

Let $\params \in \R^{\nparams}$ be the parameter vector with
prior $\params \sim \Gaussian{\priormean}{\priorcov}$.
Observations follow the linear model
\begin{equation}
  \label{eq:model:obs}
  \data = \obsmtx\,\params + \noisevec,
  \qquad
  \noisevec \mid \design \sim \Gaussian{\vv{0}}{\effnoisecov},
\end{equation}
where $\obsmtx \in \R^{\nobs \times \nparams}$ is the observation
(Vandermonde) matrix with entries $A_{ij} = x_i^{\,j-1}$ evaluated
at candidate locations $x_1,\ldots,x_{\nobs}$,
and the baseline noise is diagonal,
$\noisecov = \diag(\sigma_{\varepsilon,1}^2,\ldots,\sigma_{\varepsilon,\nobs}^2)$,
with independent, possibly heteroscedastic noise across candidates.
Allocating weight $\designk{k}$ to candidate $k$ scales its precision,
giving the diagonal effective noise covariance
$\effnoisecov = \diag(\sigma_{\varepsilon,k}^2/\designk{k})$,
so a larger weight $\designk{k}$ reduces the effective noise at
location $k$.
Its inverse, $\inv{\effnoisecov} = \diag(\designk{k}/\sigma_{\varepsilon,k}^2)$,
appears throughout the posterior expressions below.
In the polynomial-basis experiments we use a degree-4 basis ($\nparams = 5$),
$\nobs = 10$ candidate locations equally spaced in $[-1,1]$,
a common noise standard deviation $\sigma_{\varepsilon,k} = 0.5$ for all $k$, and
isotropic prior $\priorcov = \sigma_0^2 \mm{I}$ with $\sigma_0 = 0.5$.

\subsection*{A.2\quad Posterior distribution}

Because the model is linear-Gaussian, the posterior is Gaussian:
$\params \mid \data, \design \sim \Gaussian{\postmeanb}{\postcovb}$
with
\begin{align}
  \postcovb  &= \bigl(\inv{\priorcov} + \obsmtx\T\inv{\effnoisecov}\obsmtx
                \bigr)^{-1},
  \label{eq:post:cov}\\
  \postmeanb &= \postcovb\bigl(\obsmtx\T\inv{\effnoisecov}\data
                + \inv{\priorcov}\priormean\bigr).
  \label{eq:post:mean}
\end{align}
Crucially, $\postcovb$ depends only on the design $\design$,
not on the data $\data$.

\subsection*{A.3\quad Distribution of the posterior mean over data}

Since $\data = \obsmtx\params + \noisevec$ with
$(\params,\noisevec)$ independent, $\postmeanb$ is a Gaussian
random variable under the marginal data distribution
$\data \sim p(\data \mid \design)$.
Define
\begin{equation}
  \label{eq:R-proc}
  \mm{R} \coloneqq \postcovb\,\obsmtx\T\inv{\effnoisecov},
  \qquad
  \mm{R}_0 \coloneqq \mm{R}\,\obsmtx.
\end{equation}
Because $\postmeanb = \mm{R}\,\data + \postcovb\inv{\priorcov}\priormean$
and $\Eunder{p(\data\mid\design)}{\data} = \obsmtx\priormean$, the mean of $\postmeanb$ over
data is
\begin{align}
  \pmnu &\coloneqq \Eunder{p(\data\mid\design)}{\postmeanb}
         = \mm{R}\,\obsmtx\priormean
           + \postcovb\inv{\priorcov}\priormean
         = (\mm{R}_0 + \postcovb\inv{\priorcov})\priormean \notag\\
        &= \postcovb\bigl(\obsmtx\T\inv{\effnoisecov}\obsmtx
           + \inv{\priorcov}\bigr)\priormean
         = \postcovb\,\inv{\postcovb}\priormean
         = \priormean.
  \label{eq:pmnu}
\end{align}
Hence $\pmnu = \priormean$: the posterior mean is unbiased for
the prior mean in expectation.
The covariance of $\postmeanb$ over data is
\begin{equation}
  \label{eq:pmcov}
  \pmcov \coloneqq \Cov{\postmeanb}{\postmeanb}
  = \mm{R}\,\Cov{\data}{\data}\,\mm{R}\T
  = \mm{R}_0\,\priorcov\,\mm{R}_0\T
    + \mm{R}\,\noisecov\,\mm{R}\T,
\end{equation}
where $\Cov{\data}{\data} = \obsmtx\priorcov\obsmtx\T + \noisecov$.

\subsection*{A.4\quad Lognormal push-forward}

For the QoI map $\qoicomp{j} = \exp(\predvec_j\T\params)$,
define the log-mean
$\lntauj{j} \coloneqq \predvec_j\T\postmeanb$
and the posterior log-variance
$\lnsigmasqj{j} \coloneqq \predvec_j\T\postcovb\,\predvec_j$.
Since $\predvec_j\T\params \mid \data,\design \sim
\Gaussian{\lntauj{j}}{\lnsigmasqj{j}}$,
each QoI component follows a lognormal distribution:
$\qoicomp{j} \mid \data,\design
\sim \LogNormal{\lntauj{j}}{\lnsigmasqj{j}}$.
Its conditional standard deviation factorizes as
\begin{equation}
  \label{eq:Dj:model}
  d_j(\data,\design) \equiv \Std{\qoicomp{j}\mid\data,\design}
  = \lnscalej{j}\,\exp(\lntauj{j}),
\end{equation}
where the \emph{deterministic scale factor}
\begin{equation}
  \label{eq:Kj:model}
  \lnscalej{j} = \exp\!\bigl(\tfrac{1}{2}\lnsigmasqj{j}\bigr)
             \sqrt{\exp(\lnsigmasqj{j}) - 1}
\end{equation}
depends on $\design$ only through $\postcovb$.
The log-mean $\lntauj{j}$ is Gaussian over data with
\begin{equation}
  \label{eq:tauj:dist}
  \lnnuj{j} \coloneqq \E{\lntauj{j}} = \predvec_j\T\priormean,
  \qquad
  \lnsigmatauj{j} \coloneqq \Var{\lntauj{j}}
                    = \predvec_j\T\pmcov\,\predvec_j.
\end{equation}


\section{Analytical Design-Objective Expressions: Scalar \texorpdfstring{QoI}{QoI}}
\label{app:scalar}

We derive closed-form expressions for the design objectives
$U_1$--$U_4$ and the EIG baseline for the scalar-QoI case
($\nqoi = 1$).
Throughout, $\lnscale$, $\lntau$, $\lnnu$, $\lnsigmasq$,
and $\lnsigmatau$ are as defined in \appref{app:model}
(dropping the component subscript $j$ since $\nqoi = 1$).
In particular $d(\data,\design) = \lnscale\exp(\lntau)$
with $\lntau \sim \Gaussian{\lnnu}{\lnsigmatau}$
marginally over data.

\subsection{$U_1$: Expected standard deviation}

\begin{proposition}
\label{prop:u1}
$U_1(\design) = \Eunder{p(\data\mid\design)}{d(\data,\design)}
= \lnscale\exp\!\bigl(\lnnu + \tfrac{1}{2}\lnsigmatau\bigr).$
\end{proposition}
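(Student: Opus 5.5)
The plan is to use the level-1 factorization from \appref{app:model} and then apply the lognormal moment formula. First, $U_1$ uses the standard deviation at level~1, and level~2 is trivial because $\nqoi=1$: the mean over a single component with weight $\gamma_1=1$ returns $d(\data,\design)$ itself. Level~3 is the mean. Hence
\[
U_1(\design)=\Eunder{p(\data\mid\design)}{d(\data,\design)}.
\]
By \eqref{eq:Dj:model}, $d(\data,\design)=\lnscale\exp(\lntau)$. By \eqref{eq:Kj:model}, the scale $\lnscale$ depends only on $\lnsigmasq=\predvec\T\postcovb\predvec$. Since $\postcovb$ depends on $\design$ but not on $\data$ (see \eqref{eq:post:cov}), $\lnscale$ is a constant under the data marginal and factors out of the expectation.

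Next, I would establish the law of $\lntau$ under $p(\data\mid\design)$. We have $\lntau=\predvec\T\postmeanb$, and $\postmeanb=\mm{R}\data+\postcovb\inv{\priorcov}\priormean$ is affine in $\data$. Moreover, $\data=\obsmtx\params+\noisevec$ is Gaussian, being a linear image of the independent Gaussians $(\params,\noisevec)$. Therefore $\lntau$ is Gaussian, with mean $\lnnu=\predvec\T\priormean$ by \eqref{eq:pmnu} and variance $\lnsigmatau=\predvec\T\pmcov\predvec$ by \eqref{eq:pmcov} and \eqref{eq:tauj:dist}.

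The remaining step is the Gaussian moment generating function evaluated at $1$. For $\lntau\sim\Gaussian{\lnnu}{\lnsigmatau}$ it gives $\E{\exp(\lntau)}=\exp(\lnnu+\tfrac12\lnsigmatau)$, which I would obtain by completing the square in the Gaussian integral. Multiplying by $\lnscale$ yields the claimed formula.

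There is no real obstacle here. The only points requiring care are confirming that $\lnscale$ is data-independent, so that it can be pulled out, and that the Gaussianity of $\lntau$ holds under the marginal $p(\data\mid\design)$ rather than under a conditional law. Both follow directly from the conjugate formulas in \appref{app:model}.
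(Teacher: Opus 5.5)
Your proposal is correct and matches the paper's proof. You pull out the data-independent factor $K$, use that $\tau\sim\mathcal{N}(\nu,\sigma_\tau^2)$ under the data marginal $p(\vv{y}\mid\vv{w})$, and apply the lognormal mean (equivalently, the Gaussian moment generating function at $1$); your extra checks, that level~2 is trivial for $Q=1$ and that $K$ can be factored out, are details the paper leaves implicit.
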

\begin{proof}
Since $d = \lnscale e^{\lntau}$ and
$\lntau \sim \Gaussian{\lnnu}{\lnsigmatau}$,
the random variable $d/\lnscale$ is lognormal.
The mean of a lognormal $\LogNormal{\mu}{\sigma^2}$ is
$e^{\mu + \sigma^2/2}$, giving the result.
\end{proof}

\subsection{$U_3$: \texorpdfstring{AVaR$_\beta$}{AVaR} of the standard deviation over data}

\begin{proposition}
\label{prop:u3}
\begin{equation}
  \label{eq:U3}
  U_3(\design)
  = \AVaR{\beta}{d(\data,\design)}
  = \frac{\lnscale\exp\!\bigl(\lnnu + \tfrac{1}{2}\lnsigmatau\bigr)}
         {1-\beta}\;
    \normcdf{\sqrt{\lnsigmatau} - \normppf{\beta}}.
\end{equation}
\end{proposition}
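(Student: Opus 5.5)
We reduce to the AVaR of a lognormal random variable and use the quantile-integral representation \eqref{eq:risk:avar}. As in the proof of \Cref{prop:u1}, $d(\data,\design)=\lnscale e^{\lntau}$ with $\lntau\sim\Gaussian{\lnnu}{\lnsigmatau}$ marginally over data, and $\lnscale>0$ is deterministic given $\design$. AVaR is positively homogeneous, which follows directly from the quantile form since $\VaR{t}{cZ}=c\,\VaR{t}{Z}$ for $c>0$. Hence $U_3(\design)=\lnscale\,\AVaR{\beta}{e^{\lntau}}$, and it suffices to compute $\AVaR{\beta}{X}$ for $X\sim\LogNormal{\lnnu}{\lnsigmatau}$. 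Write $s=\sqrt{\lnsigmatau}$. If $s=0$ the variable is constant and the formula still holds, since $\normcdf{-\normppf{\beta}}=1-\beta$. So assume $s>0$.

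Because $X$ has a continuous, strictly increasing CDF, the quantile is $\VaR{t}{X}=\exp(\lnnu+s\,\normppf{t})$. We can then rewrite the AVaR as a tail expectation:
\[
\AVaR{\beta}{X}=\frac{1}{1-\beta}\int_\beta^1 e^{\lnnu+s\normppf{t}}\,\mathrm{d}t=\frac{1}{1-\beta}\,\E{X\,\mathbb{1}\{X\ge \VaR{\beta}{X}\}}.
\]
To obtain the second form, substitute $t=\normcdf{z}$, so the integral becomes $\int_{\normppf{\beta}}^\infty e^{\lnnu+sz}\phi(z)\,\mathrm{d}z$, where $\phi$ is the standard normal density.

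The key step is to complete the square: $e^{sz}\phi(z)=e^{s^2/2}\phi(z-s)$. This gives
\[
\int_{\normppf{\beta}}^\infty e^{\lnnu+sz}\phi(z)\,\mathrm{d}z=e^{\lnnu+s^2/2}\bigl(1-\normcdf{\normppf{\beta}-s}\bigr)=e^{\lnnu+s^2/2}\,\normcdf{s-\normppf{\beta}},
\]
using the symmetry $1-\Phi(a)=\Phi(-a)$. Multiplying by $\lnscale/(1-\beta)$ and recalling $s^2=\lnsigmatau$ yields \eqref{eq:U3}.

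None of the steps is hard. The points that need care are the justification of the quantile-integral form, which requires the continuity of the lognormal CDF, and the sign bookkeeping in $\normcdf{\sqrt{\lnsigmatau}-\normppf{\beta}}$. As a sanity check, setting $\beta=0$ gives $\normcdf{\infty}=1$ and recovers $U_1$ from \Cref{prop:u1}.
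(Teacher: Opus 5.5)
Your proof is correct and follows the same route as the paper: write $d = K e^{\tau}$ as a lognormal random variable and apply the lognormal AVaR (conditional tail expectation) formula. The only difference is that the paper quotes that formula and absorbs $\log K$ into the location parameter, whereas you pull $K$ out by positive homogeneity and derive the formula yourself from the quantile-integral representation (substituting $t=\Phi(z)$ and completing the square), which makes your argument self-contained.
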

\begin{proof}
The random variable $d = \lnscale e^\lntau$ is lognormal with
parameters $\log\lnscale + \lnnu$ and $\lnsigmatau$.
For a lognormal $X \sim \LogNormal{\mu_X}{\sigma_X^2}$, the
$\beta$-quantile is $q_\beta = e^{\mu_X + \sigma_X\normppf{\beta}}$
and the conditional tail expectation is
\[
  \AVaR{\beta}{X}
  = \frac{\Eunder{p(\data\mid\design)}{X}}{1-\beta}\,
    \normcdf{\sigma_X - \normppf{\beta}}.
\]
Substituting $\mu_X = \log\lnscale + \lnnu$,
$\sigma_X = \sqrt{\lnsigmatau}$, and
$\Eunder{p(\data\mid\design)}{X} = \lnscale\exp(\lnnu + \lnsigmatau/2)$
gives \eqnref{eq:U3}.
\end{proof}

\subsection{$U_4$: Safety margin}

\begin{proposition}
\label{prop:u4}
For safety factor $c \ge 0$,
\begin{equation}
  \label{eq:U4}
  U_4(\design)
  = \Eunder{p(\data\mid\design)}{d} + c\,\Std{d(\data,\design)}
  = \lnscale\exp\!\bigl(\lnnu + \tfrac{1}{2}\lnsigmatau\bigr)
    \Bigl(1 + c\sqrt{\exp(\lnsigmatau) - 1}\Bigr).
\end{equation}
\end{proposition}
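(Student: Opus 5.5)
The plan follows the pattern of Propositions~\ref{prop:u1} and~\ref{prop:u3}: write the level-2 output $d(\data,\design)=\lnscale e^{\lntau}$ as an explicit lognormal random variable over the data marginal, and then evaluate the safety-margin risk measure $\mathcal{R}(Z)=\E{Z}+c\,\Std{Z}$ from \Cref{tab:families} using standard lognormal moments. Since $\lnscale$ depends on $\design$ only through $\postcovb$, which is data-independent by \eqref{eq:post:cov}, it is a deterministic constant at level~3. By \eqref{eq:tauj:dist} we have $\lntau\sim\Gaussian{\lnnu}{\lnsigmatau}$ under $p(\data\mid\design)$. Hence $d\sim\LogNormal{\log\lnscale+\lnnu}{\lnsigmatau}$, exactly as in the proof of Proposition~\ref{prop:u3}.

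The mean term is already supplied by Proposition~\ref{prop:u1}:
\[
\Eunder{p(\data\mid\design)}{d}=\lnscale\exp\!\bigl(\lnnu+\tfrac12\lnsigmatau\bigr).
\]
For the standard deviation, I will use the lognormal second moment $\E{e^{2\lntau}}=e^{2\lnnu+2\lnsigmatau}$. This gives
\[
\Var{d}=\lnscale^2\bigl(e^{2\lnnu+2\lnsigmatau}-e^{2\lnnu+\lnsigmatau}\bigr)=\lnscale^2 e^{2\lnnu+\lnsigmatau}\bigl(e^{\lnsigmatau}-1\bigr).
\]
Taking the square root, which is legitimate because $\lnscale\ge 0$ and $e^{\lnsigmatau}\ge 1$, yields
\[
\Std{d}=\lnscale\exp\!\bigl(\lnnu+\tfrac12\lnsigmatau\bigr)\sqrt{e^{\lnsigmatau}-1}.
\]
This is precisely the mean multiplied by the lognormal coefficient of variation.

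Finally, substitute both expressions into $\E{d}+c\,\Std{d}$ and factor out the common term $\lnscale\exp(\lnnu+\tfrac12\lnsigmatau)$. This produces \eqref{eq:U4}. The only point needing care, though it is hardly an obstacle, is confirming that $\lnscale$ is genuinely nonrandom, so that it factors out of both the expectation and the standard deviation. A secondary check is that the standard deviation is taken over the data marginal and not the posterior; the constraint $c\ge0$ is needed only for monotonicity of the risk measure, not for the computation itself.
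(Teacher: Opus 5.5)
Your proposal is correct and follows the same route as the paper. Both take the mean from Proposition~\ref{prop:u1}, compute the lognormal variance $\lnscale^2 e^{2\lnnu+\lnsigmatau}(e^{\lnsigmatau}-1)$ so that the standard deviation is the mean times $\sqrt{e^{\lnsigmatau}-1}$, and substitute into $\E{d}+c\,\Std{d}$. Your explicit check that $\lnscale$ is data-independent, so it factors out of both moments, is a step the paper leaves implicit.
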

\begin{proof}
$\Eunder{p(\data\mid\design)}{d} = U_1$ from Proposition~\ref{prop:u1}.
For the lognormal $d/\lnscale \sim \LogNormal{\lnnu}{\lnsigmatau}$:
$\Varunder{p(\data\mid\design)}{d} = \lnscale^2 e^{2\lnnu+\lnsigmatau}(e^{\lnsigmatau}-1)$,
so $\Stdunder{p(\data\mid\design)}{d} = \lnscale e^{\lnnu+\lnsigmatau/2}\sqrt{e^{\lnsigmatau}-1}
= U_1\sqrt{e^{\lnsigmatau}-1}$.
Substituting into $U_4 = \Eunder{p(\data\mid\design)}{d} + c\,\Stdunder{p(\data\mid\design)}{d}$ gives \eqnref{eq:U4}.
\end{proof}

\subsection{EIG baseline}

The KL-based baseline used in the numerical experiments is the
expected information gain (EIG) on the \emph{full} parameter
posterior, not its projection onto the prediction direction
$\predvec$.
For the linear-Gaussian model of \appref{app:model} this EIG admits
a closed form that is data-independent because the posterior
covariance does not depend on the realized data.

\begin{proposition}
\label{prop:kl}
Let $\fisherb(\design) = \obsmtx\T \inv{\effnoisecov}\, \obsmtx$
denote the design-weighted Fisher information.
Then
\begin{equation}
  \label{eq:KL}
  U_{\mathrm{KL}}(\design)
  \coloneqq \Eunder{p(\data\mid\design)}{
      \KL{p(\params \mid \data, \design)}{p(\params)}
    }
  = \tfrac{1}{2}\log\frac{\det\priorcov}{\det\postcovb}
  = \tfrac{1}{2}\log\det\!\left(
      \mm{I} + \priorcov\,\fisherb(\design)
    \right).
\end{equation}
\end{proposition}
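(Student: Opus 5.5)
Since the posterior covariance $\postcovb$ in \eqnref{eq:post:cov} does not depend on $\data$, the plan is to compute the KL divergence for a single fixed dataset and then take its expectation over $p(\data\mid\design)$. For two Gaussians the standard formula gives
\[
\KL{\Gaussian{\postmeanb}{\postcovb}}{\Gaussian{\priormean}{\priorcov}}
= \tfrac12\Bigl[\tr\bigl(\inv{\priorcov}\postcovb\bigr) - \nparams
+ (\postmeanb-\priormean)\T\inv{\priorcov}(\postmeanb-\priormean)
+ \log\tfrac{\det\priorcov}{\det\postcovb}\Bigr].
\]
Only the quadratic term depends on the data, through $\postmeanb$. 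Taking the expectation of this term is the only substantive step.

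By \appref{app:model}, A.3, we have $\Eunder{p(\data\mid\design)}{\postmeanb}=\pmnu=\priormean$. Hence the expected quadratic form is $\tr(\inv{\priorcov}\pmcov)$. I would avoid expanding $\pmcov$ from \eqnref{eq:pmcov} directly. Instead I would use the law of total covariance, since $\params$ has marginal covariance $\priorcov$ and conditional covariance $\postcovb$ given $\data$:
\[
\priorcov = \Eunder{p(\data\mid\design)}{\postcovb} + \Cov{\postmeanb}{\postmeanb} = \postcovb + \pmcov ,
\]
so $\pmcov=\priorcov-\postcovb$. This identity holds only when $\Cov{\data}{\data}$ uses the effective noise covariance $\effnoisecov$, so that the data marginal matches the likelihood. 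I would note that explicitly, because \eqnref{eq:pmcov} is written with $\noisecov$ and $\effnoisecov$ must be read there. The expected quadratic term then becomes $\tr(\mm{I}-\inv{\priorcov}\postcovb)=\nparams-\tr(\inv{\priorcov}\postcovb)$.

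Substituting this back, the trace terms and the $\nparams$ cancel. What remains is $\tfrac12\log(\det\priorcov/\det\postcovb)$. The last equality comes from $\inv{\postcovb}=\inv{\priorcov}+\fisherb(\design)$, which gives
\[
\det\priorcov/\det\postcovb=\det\bigl(\priorcov\,\inv{\postcovb}\bigr)=\det\bigl(\mm{I}+\priorcov\fisherb(\design)\bigr).
\]

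Apart from that noise-covariance consistency check, the remaining steps are routine determinant algebra.
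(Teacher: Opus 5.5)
Your proposal is correct and follows the paper's proof essentially step for step. Both use the Gaussian KL formula, evaluate the expected quadratic form as $\tr(\inv{\priorcov}\pmcov)$ via $\pmnu=\priormean$, apply the law of total covariance $\priorcov=\postcovb+\pmcov$ to cancel the trace and $\nparams$ terms, and use $\inv{\postcovb}=\inv{\priorcov}+\fisherb(\design)$ for the Fisher form. Your remark that $\Cov{\data}{\data}$ in \eqnref{eq:pmcov} must be built from $\effnoisecov$ rather than $\noisecov$ is a valid catch. Like the paper, you never expand \eqnref{eq:pmcov}, so that typo does not affect the argument.
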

\begin{proof}
For Gaussian
$p(\params) = \Gaussian{\priormean}{\priorcov}$ and
$p(\params \mid \data, \design) = \Gaussian{\postmeanb}{\postcovb}$,
the KL divergence has the standard form
\begin{equation}
  \label{eq:KL:gauss}
  \KL{p(\params \mid \data, \design)}{p(\params)}
  = \tfrac{1}{2}\!\left[
      \tr\!\left(\inv{\priorcov}\,\postcovb\right)
      - \nparams
      + \log\frac{\det\priorcov}{\det\postcovb}
      + (\postmeanb - \priormean)\T \inv{\priorcov}\,
        (\postmeanb - \priormean)
    \right].
\end{equation}
Taking the expectation over $\data \sim p(\data \mid \design)$:
the first three terms are $\data$-independent
(\appref{app:model}~A.2 gives $\postcovb$ independent of $\data$);
for the quadratic term,
$\Eunder{p(\data\mid\design)}{\postmeanb} = \priormean$
\eqnref{eq:pmnu} and
$\Cov{\postmeanb}{\postmeanb} = \pmcov$
\eqnref{eq:pmcov} give
\[
  \Eunder{p(\data\mid\design)}{
      (\postmeanb - \priormean)\T \inv{\priorcov}
      (\postmeanb - \priormean)
    }
  = \tr\!\left(\inv{\priorcov}\,\pmcov\right).
\]
The law of total variance, applied to $\params$ under
$p(\params,\data \mid \design)$ and using that $\postcovb$ is
$\data$-independent, gives
$\priorcov = \postcovb + \pmcov$, so
$\tr(\inv{\priorcov}\,\pmcov)
 = \nparams - \tr(\inv{\priorcov}\,\postcovb)$,
which cancels the first two terms of~\eqnref{eq:KL:gauss}.
The surviving log-determinant gives the middle expression
in~\eqnref{eq:KL}.
The Fisher form follows from
$\inv{\postcovb} = \inv{\priorcov} + \fisherb(\design)$
\eqnref{eq:post:cov}, which gives
$\priorcov\,\inv{\postcovb} = \mm{I} + \priorcov\,\fisherb(\design)$
and hence
$\det(\priorcov)/\det(\postcovb) = \det(\mm{I} + \priorcov\,\fisherb(\design))$.
\end{proof}


\section{Convergence of the Design-Objective Estimator}
\label{app:convergence}

\iflongappendix
This appendix verifies that the nested-quadrature design-objective
estimator of \secref{sec:computation:algorithm} converges to the
closed-form references of \appref{app:scalar}, quantifies how its
accuracy depends on the number of inner and outer quadrature
points, and compares Monte Carlo (MC) sampling against
quasi-Monte Carlo (QMC) using randomly-shifted Halton sequences.
\else
This appendix verifies that the nested-quadrature design-objective
estimator of \secref{sec:computation:algorithm} converges to
closed-form reference values, derived analytically from the
Gaussian conjugacy of the posterior and the moments and quantiles
of the lognormal push-forward of the QoI, quantifies how its
accuracy depends on the number of inner and outer quadrature
points, and compares Monte Carlo (MC) sampling against
quasi-Monte Carlo (QMC) using randomly-shifted Halton sequences.
\fi

\iflongappendix
To make the comparison against an exact reference possible, we
use the linear Gaussian model of
\secref{sec:linear_gaussian_running_example}, the one setting in
which the design objectives admit the closed forms of
\appref{app:scalar}.
\else
To make the comparison against an exact reference possible, we
use the linear Gaussian model of
\secref{sec:linear_gaussian_running_example}, the one setting in
which the design objectives admit closed forms, obtained from
standard Gaussian posterior algebra and the analytical mean,
standard deviation, and quantiles of the resulting lognormal
push-forward.
\fi
We take a degree-$p = 4$ polynomial basis
$\psi(x) = [1,\, x,\, x^2,\, x^3,\, x^4]\T$, so that
$\obsmtx \in \R^{\nobs \times (p+1)}$ is a Vandermonde matrix
evaluated at the $\nobs = 10$ equidistant observation locations on
$[-1, 1]$.
The QoI map~\eqnref{eq:lognormal_qoi} is evaluated at a single
prediction location $x_{\mathrm{pred}} = 0$, giving a scalar QoI
($\nqoi = 1$), and we use the prior
$\params \sim \Gaussian{\vv{0}}{0.25\,\mm{I}}$ with base noise
covariance $\noisecov = 0.25\,\mm{I}$.

We evaluate the estimator at the uniform design
$\design = \vv{1}/\nobs$ for the four named design objectives whose
analytical values are available --- $U_1$ (std / mean / mean),
$U_3$ (std / mean / $\mathrm{AVaR}_\beta$, $\beta = 0.9$), $U_6$
(std / mean / mean$+c\,\cdot\,$Std, $c = 1$), and KL.
For each, we generate $R = 1000$ independent data realizations,
compute the estimate $\hat{U}_{\nout,\nin}^{(r)}$ on each, and
form the Monte Carlo approximation to the mean squared error
\begin{equation}
  \label{eq:mse}
  \mathrm{MSE}(\hat{U}_{\nout,\nin})
  \approx \frac{1}{R}\sum_{r=1}^{R}
    \bigl(U - \hat{U}_{\nout,\nin}^{(r)}\bigr)^2
  = \mathrm{Bias}^2(\hat{U}_{\nout,\nin})
  + \mathrm{Var}(\hat{U}_{\nout,\nin}),
\end{equation}
where $U$ is the analytical value.
Estimating the squared bias
$(\E{\hat{U}_{\nout,\nin}} - U)^2$ and the variance
$\E{(\hat{U}_{\nout,\nin} - \E{\hat{U}_{\nout,\nin}})^2}$ from the
same $R$ replicates separates the two sources of error, so we can
see which one limits accuracy in each regime.
For QMC the replicates correspond to independent Halton shifts,
which preserves this bias--variance decomposition.

\figref{fig:exp1:convergence} reports the two error components as
the inner and outer sample counts are swept in turn ($\nin$
varied at $\nout = 4000$, then $\nout$ varied at $\nin = 4000$).
For the prediction objectives $U_1$, $U_3$, and $U_6$, refining
the inner loop drives both components down, and because the
variance stays above the squared bias throughout, it is the
variance that sets the achievable accuracy.
The KL objective behaves differently, and for a structural
reason: its log-sum-exp form converges rapidly in the inner loop,
so once $\nin \gtrsim 200$ the inner contribution is already
negligible and the total error plateaus at a floor set by the
outer loop.
Adding inner samples beyond this point reduces the Jensen bias of
the KL estimator but cannot touch the outer-loop sampling
variance that now dominates, so the MSE stops improving.

The outer-loop sweep confirms this division of labor.
For the prediction objectives both bias and variance again fall,
with variance dominating, so increasing $\nout$ continues to buy
accuracy.
KL moves in the opposite sense from its inner-loop behavior: its
variance decreases monotonically with $\nout$ while its squared
bias stays essentially constant, since the bias is fixed by
$\nin$ and is therefore independent of how many outer samples are
drawn.
Across every objective, QMC outer sampling reaches lower variance
than MC at matched sample counts, a direct consequence of the
low-discrepancy structure of the Halton sequence.
Together these sweeps show the estimator converging to the
analytical references and identify which loop to refine for a
given objective: the inner loop for the prediction objectives,
the outer loop for KL.

\begin{figure}[htb]
  \centering
  \includegraphics[width=\textwidth]{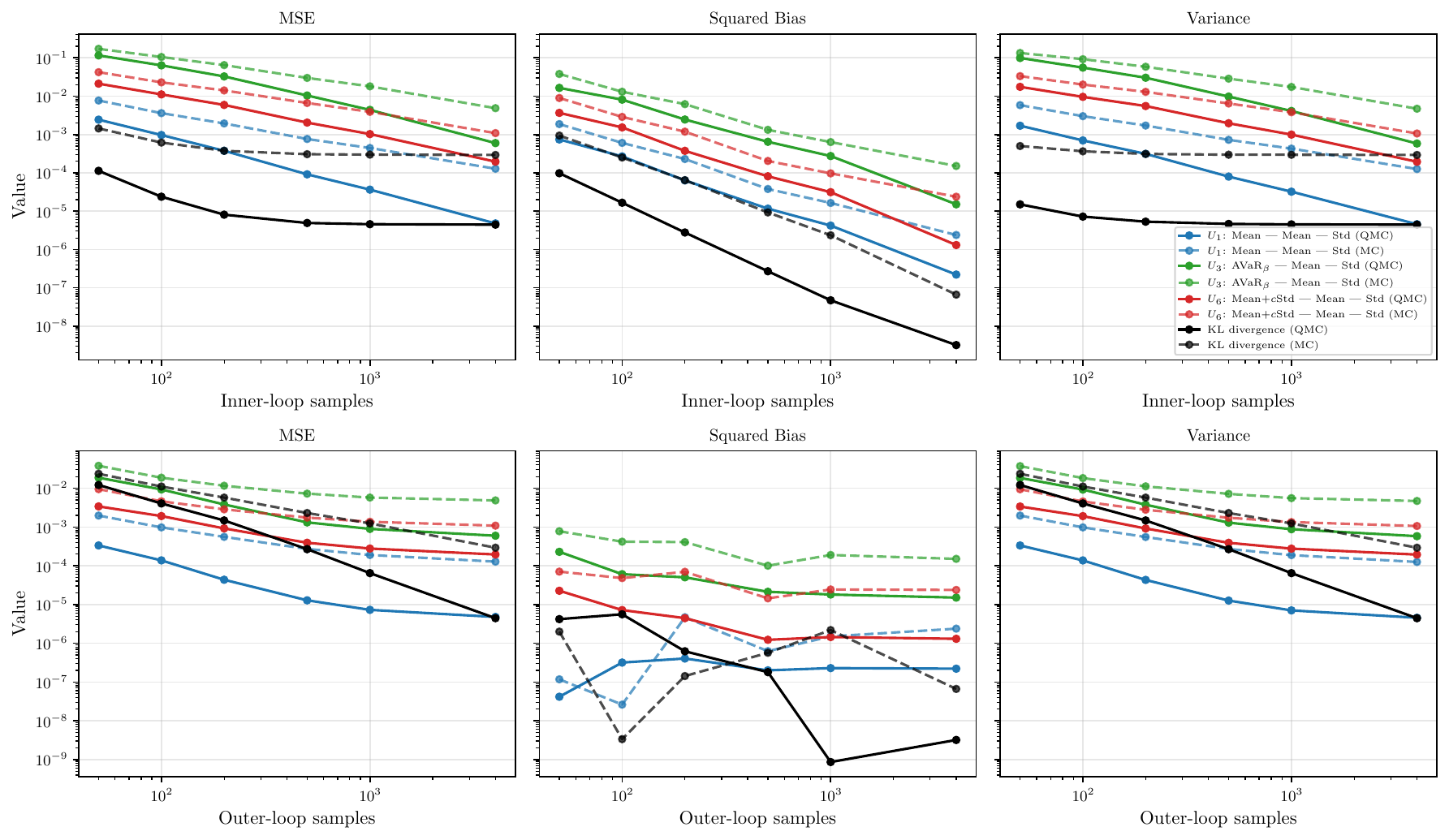}
  \caption{%
    MSE decomposition ($\mathrm{MSE} = \mathrm{Bias}^2 + \mathrm{Var}$)
    for MC (dashed) and QMC (solid) estimators of the
    $U_1$, $U_3$, $U_6$, and KL utilities, evaluated at uniform
    design weights.
    \textit{Top row:} inner-loop sweep ($\nin$ varied,
    $\nout = 4000$ fixed).
    \textit{Bottom row:} outer-loop sweep ($\nout$ varied,
    $\nin = 4000$ fixed).
    The decoupled estimator of~\secref{sec:computation:algorithm}
    shares a single inner sample set across all outer samples.%
  }
  \label{fig:exp1:convergence}
\end{figure}


\section{Gradient Derivations}
\label{app:gradients}

We derive exact gradients of the sample-average OED objective with
respect to the design weights $\design$, using fixed quadrature
samples drawn once before optimization.
Section~\ref{app:gradients:general} presents the general three-level
chain rule; Section~\ref{app:gradients:u1} specializes to the
Std/Mean/Mean configuration (design objective~U1).

\subsection{Reparameterized observations and log-likelihood}
\label{app:gradients:general}

Fix outer samples
$\{(\outersamp{m}, \noisem{m})\}_{m=1}^{\nout}$
from the joint prior--noise distribution, with
$\noisem{m} \sim \Gaussian{\vv{0}}{\mm{I}}$, and inner prior samples
$\{\innersamp{n}\}_{n=1}^{\nin}$.
As in \appref{app:model}, the baseline noise covariance is diagonal,
$\noisecov = \diag(\sigma_{\varepsilon,1}^2,\ldots,\sigma_{\varepsilon,\nobs}^2)$,
and the design enters as the diagonal weighting
$\weightmtx = \diag(\design)$, so the reparameterization that
generates the artificial observations separates across sensors.
For each outer sample $m$ and sensor $k$,
\begin{equation}
  \label{eq:app:reparam}
  y_k^{(m)}
  = \obsmtx_k\T\outersamp{m}
    + \frac{\sigma_{\varepsilon,k}}{\sqrt{\designk{k}}}\,z_k^{(m)},
\end{equation}
where $z_k^{(m)}$ is the $k$-th entry of $\noisem{m}$ and the
latent noise realizations are drawn once and held fixed throughout
optimization.
This makes each $\datam{m}$ a deterministic, differentiable
function of $\design$, and because the weighting acts
sensor-by-sensor, reducing $\designk{k}$ toward zero inflates the
effective noise on sensor $k$ toward infinity.

The log-likelihood inherits the same separation across sensors.
For inner sample $n$ and outer realization $m$, the log-likelihood is
\begin{equation}
  \label{eq:loglike}
  \ell_{mn}(\design)
  = -\tfrac{1}{2}\sum_{k=1}^{\nobs}
    \left[\log\frac{\sigma_{\varepsilon,k}^2}{\designk{k}}
    + \frac{\designk{k}\,\residk{m}{n}{k}^2}{\sigma_{\varepsilon,k}^2}\right]
    + \mathrm{const},
\end{equation}
where $\residk{m}{n}{k} = y_k^{(m)} - \obsmtx_k\T\innersamp{n}$ is the
residual at location $k$.
Because $y_k^{(m)}$ depends on $\design$ through
\eqnref{eq:app:reparam}, the partial derivative has three contributions:
\begin{equation}
  \label{eq:loglike-jac}
  \frac{\partial\ell_{mn}}{\partial\designk{k}}
  = \underbrace{\frac{1}{2\designk{k}}}_{\text{log-det}}
    \underbrace{-\;\frac{\residk{m}{n}{k}^2}{2\sigma_{\varepsilon,k}^2}}_{\text{quadratic}}
    \underbrace{+\;\frac{\residk{m}{n}{k}\,z_k^{(m)}}
                        {2\sigma_{\varepsilon,k}\sqrt{\designk{k}}}}_{\text{reparameterization}}.
\end{equation}
The first two terms arise from the Gaussian log-density; the third
accounts for the implicit design-dependence of $\datam{m}$.

\subsection{Evidence and normalized importance weights}

The marginal likelihood (evidence) for outer sample $m$ is
\begin{equation}
  \label{eq:app:evidence}
  \hat p(\datam{m} \mid \design)
  = \sum_{n=1}^{\nin} \inwt{n}\,
    \exp\!\bigl(\ell_{mn}(\design)\bigr),
\end{equation}
where $\inwt{n}$ are inner quadrature weights (uniform
$1/\nin$ for Monte Carlo).
The normalized importance weights
\begin{equation}
  \label{eq:app:impweights}
  \impwt{m}{n}
  = \frac{\inwt{n}\,\exp\!\bigl(\ell_{mn}(\design)\bigr)}
         {\hat p(\datam{m} \mid \design)}
\end{equation}
approximate the posterior $p(\innersamp{n}\mid\datam{m},\design)$
and satisfy $\sum_n \impwt{m}{n} = 1$ for each $m$.
Their gradient with respect to $\designk{k}$ follows from the
quotient rule:
\begin{equation}
  \label{eq:impweights-jac}
  \frac{\partial\impwt{m}{n}}{\partial\designk{k}}
  = \impwt{m}{n}\!\left[
      \frac{\partial\ell_{mn}}{\partial\designk{k}}
      - \sum_{n'} \impwt{m}{n'}\,
        \frac{\partial\ell_{mn'}}{\partial\designk{k}}
    \right],
\end{equation}
where the bracketed term is the importance-weighted mean of the
log-likelihood gradient, required to propagate the gradient
through the normalization.

\subsection{Chain rule through the three levels}

The sample-average approximation of \eqnref{eq:utility:full} is
\begin{equation}
  \hat{U}(\design)
  = \datarisk\!\left[
      \qoirisk\!\left[
        \hat{d}_j(\datam{m}, \design)
      \right]_{j=1}^{\nqoi}
    \right]_{m=1}^{\nout},
\end{equation}
where $\hat{d}_j$ is the deviation estimate at prediction
component $j$ and outer sample $m$.
The gradient is assembled by applying the chain rule through each
level in turn.

\begin{enumerate}

\item \textbf{Level~1 (deviation):}
For each prediction component $j$ and outer sample $m$, compute
$\hat{d}_{jm}(\design)$ and its partial derivatives
$\partial\hat{d}_{jm}/\partial\designk{k}$
(see Sections~\ref{app:gradients:u1} and~\ref{app:smooth-avar}
for specific deviation types).

\item \textbf{Level~2 (QoI risk):}
Apply $\qoirisk$ to
$(\hat{d}_{1m},\ldots,\hat{d}_{\nqoi m})$ for each $m$ to obtain
the per-realization scalar $\hat{r}_m(\design)$.
The chain rule gives
\[
  \frac{\partial\hat{r}_m}{\partial\designk{k}}
  = \sum_{j=1}^{\nqoi}
    \frac{\partial\qoirisk}{\partial\hat{d}_{jm}}
    \cdot
    \frac{\partial\hat{d}_{jm}}{\partial\designk{k}}.
\]
For the mean, $\partial\qoirisk/\partial\hat{d}_{jm} = 1/\nqoi$.
For $\mathrm{AVaR}_\alpha$, the partial is the projection weight
$\pi^*_{jm}$ from \S~D.4 below.
For the entropic risk
$\frac{1}{\lambda}\log\frac{1}{\nqoi}\sum_j e^{\lambda\hat{d}_{jm}}$,
it is the exponentially tilted weight
$e^{\lambda\hat{d}_{jm}}/\sum_{j'}e^{\lambda\hat{d}_{j'm}}$.

\item \textbf{Level~3 (data risk):}
Apply $\datarisk$ to $(\hat{r}_1,\ldots,\hat{r}_{\nout})$
to obtain the objective estimate $\hat{U}(\design)$.
The gradient is
\[
  \frac{\partial\hat{U}}{\partial\designk{k}}
  = \sum_{m=1}^{\nout}
    \frac{\partial\datarisk}{\partial\hat{r}_m}
    \cdot
    \frac{\partial\hat{r}_m}{\partial\designk{k}},
\]
with the same risk-measure-specific weights as in Level~2.

\end{enumerate}

The complete Jacobian $\nabla_{\design} \hat{U}(\design)$ requires
only quantities already computed in the forward pass, making the
gradient cost a constant multiple of the function evaluation cost.

\subsection{Smoothed AVaR evaluation and gradient}
\label{app:smooth-avar}

The standard AVaR at level $\alpha \in [0,1)$ of a discrete
distribution with samples $\{f_i\}_{i=1}^{N}$ and probability
weights $\{v_i\}_{i=1}^{N}$ summing to one is
\begin{equation}
  \label{eq:avar:def}
  \AVaR{\alpha}{f}
  = \VaR{\alpha}{f}
    + \frac{1}{1-\alpha}\,
      \sum_{i=1}^{N} v_i\,\max\!\left(0,\, f_i - \VaR{\alpha}{f}\right),
\end{equation}
where $\VaR{\alpha}{f}$ is the $\alpha$-quantile.
The positive-part function $t\mapsto\max(0,t)$ makes
\eqnref{eq:avar:def} non-differentiable.
We use a regularization of the dual
(risk-envelope) representation~\cite{kouri2020epi}.
We note that one could instead leverage the primal-dual risk minimization algorithm \cite{kouri2022primal} to avoid smoothing.

\paragraph{Dual representation and smoothing.}
The exact AVaR admits the dual representation
\begin{equation}
  \AVaR{\alpha}{f}
  = \max_{\vv{\pi}\in\mathcal{C}_\alpha}
    \sum_{i=1}^{N} \pi_i f_i,
\end{equation}
where the AVaR risk envelope is
$\mathcal{C}_\alpha = \bigl\{
  \vv{\pi}\in\R^N : 0 \le \pi_i/v_i \le (1-\alpha)^{-1},\;
  \sum_i \pi_i = 1
\bigr\}$.
The smoothed version is the Moreau-Yosida approximation:
\begin{equation}
  \label{eq:smooth-avar-E4}
  \widetilde{\mathrm{AVaR}}_{\alpha,\delta}(f)
  = \max_{\vv{\pi}\in\mathcal{C}_\alpha}
    \left\{
      \sum_{i=1}^{N}\pi_i f_i
      - \frac{1}{2\delta}\sum_{i=1}^{N}
        \frac{\pi_i^2}{v_i}
    \right\},
  \qquad \delta > 0.
\end{equation}
The error between $\widetilde{\mathrm{AVaR}}_{\alpha,\delta}$ and $\mathrm{AVaR}_\alpha$ is $\mathcal{O}(\delta^{-1})$ and so $\widetilde{\mathrm{AVaR}}_{\alpha,\delta}\to\mathrm{AVaR}_\alpha$ as $\delta\to\infty$ \cite{kouri2020epi,kouri2022primal}.

\paragraph{Computation via weighted projection.}
The maximizer of \eqnref{eq:smooth-avar-E4} is the projection
of $s_i = \delta v_i f_i$ onto $\mathcal{C}_\alpha$ under the
$V^{-1}$-weighted norm, solved by weighted clipping:
\begin{equation}
  \pi_i^* = \min\!\left(\frac{v_i}{1-\alpha},\;
                   \max(0,\, s_i - \mu v_i)\right),
\end{equation}
where the Lagrange multiplier $\mu$ enforcing $\sum_i\pi_i^*=1$
is found by sorting the kinks of the piecewise linear function $\mu\mapsto\sum_i \pi_i^*-1$ and then using bisection to identify the interval containing the root.
The smoothed objective evaluates to
\begin{equation}
  \widetilde{\mathrm{AVaR}}_{\alpha,\delta}(f)
  = \sum_{i=1}^N \pi^*_i f_i
    - \frac{1}{2\delta}\sum_{i=1}^N \frac{(\pi^*_i)^2}{v_i}.
\end{equation}

\paragraph{Gradient.}
When the weights $v_i$ do not depend on $\design$, the gradient is
\begin{equation}
  \label{eq:avar-grad-fixed}
  \nabla_{\design}\,\widetilde{\mathrm{AVaR}}_{\alpha,\delta}(f)
  = \sum_{i=1}^N \pi^*_i \,\nabla_{\design} f_i.
\end{equation}
When $v_i = v_i(\design)$ (design-dependent importance weights),
no closed-form gradient is available owing to the discontinuous
active set of the binary search; gradients are propagated via
reverse-mode automatic differentiation through the smoothed
projection.
We use $\delta = 10^{5}$ throughout; the risk level $\alpha$ is
specified per experiment in \secref{sec:experiments}.

\subsection{Specialization: Std/Mean/Mean (design objective U1)}
\label{app:gradients:u1}

For $\devmeas = \mathrm{Std}$, $\qoirisk = \mathrm{Mean}$,
$\datarisk = \mathrm{Mean}$, and scalar QoI ($\nqoi = 1$),
the sample-average objective is
\begin{equation}
  \label{eq:u1-saa}
  \hat{U}_1(\design)
  = \sum_{m=1}^{\nout} \outwt{m}\,\sigma_m(\design),
\end{equation}
where $\outwt{m}$ are outer quadrature weights and
\begin{equation}
  \label{eq:std-dev}
  \sigma_m(\design) = \sqrt{V_m(\design)},
  \qquad
  V_m = M_m^{(2)} - \bigl[M_m^{(1)}\bigr]^2,
\end{equation}
with posterior moments
\begin{equation}
  \label{eq:moments}
  M_m^{(p)}(\design)
  = \sum_{n=1}^{\nin} \impwt{m}{n}(\design)\,f_n^p,
  \qquad p = 1, 2,
\end{equation}
and $f_n = \qoimap(\innersamp{n})$ the (fixed) QoI value at inner
sample $n$.

\paragraph{Moment gradients.}
Differentiating \eqnref{eq:moments} and substituting
\eqnref{eq:impweights-jac}:
\begin{equation}
  \label{eq:mom-jac}
  \frac{\partial M_m^{(p)}}{\partial\designk{k}}
  = \sum_{n=1}^{\nin} f_n^p\,
    \frac{\partial\impwt{m}{n}}{\partial\designk{k}},
  \qquad p = 1, 2.
\end{equation}

\paragraph{Variance and standard-deviation gradients.}
By the chain rule,
\begin{align}
  \frac{\partial V_m}{\partial\designk{k}}
  &= \frac{\partial M_m^{(2)}}{\partial\designk{k}}
     - 2\,M_m^{(1)}\,
       \frac{\partial M_m^{(1)}}{\partial\designk{k}},
  \label{eq:var-jac}\\[4pt]
  \frac{\partial\sigma_m}{\partial\designk{k}}
  &= \frac{1}{2\,\sigma_m}\,
     \frac{\partial V_m}{\partial\designk{k}}.
  \label{eq:std-jac}
\end{align}

\paragraph{Objective gradient.}
Combining \eqnref{eq:u1-saa} with \eqnref{eq:std-jac}:
\begin{equation}
  \label{eq:u1-grad}
  \frac{\partial\hat{U}_1}{\partial\designk{k}}
  = \sum_{m=1}^{\nout} \outwt{m}\,
    \frac{\partial\sigma_m}{\partial\designk{k}}.
\end{equation}
The gradients for other deviation measures (variance, entropic)
follow the same structure with $\sigma_m$ replaced by the
corresponding functional of the importance-weighted moments.


\section{Advection--Diffusion Experiment: Regime Replications}
\label{app:advdiff:regimes}

This appendix collects the diffusion-dominated ($D = 0.1$,
$\mathrm{Pe}_h = 0.024$) and moderate ($D = 0.02$,
$\mathrm{Pe}_h = 0.120$) regime replications of the
advection--diffusion sensor-placement experiment
(\secref{sec:experiments:advdiff:setup}).
The regimes differ only in the tracer diffusivity $D$ and are
summarized by the mesh P\'{e}clet number
$\mathrm{Pe}_h = |\vv{u}|_{\max}\, h / (2D)$.
Here $h$ is the local mesh element size and $|\vv{u}|_{\max}$ is
the maximum velocity magnitude from the Navier--Stokes solution.
The advection-dominated regime reported in the main text has
$D = 0.005$ and $\mathrm{Pe}_h = 0.484$.
Each regime is documented with a bootstrap-weight boxplot for the
active candidate sensors (\figref{fig:advdiff:boxplot:C1} and
\figref{fig:advdiff:boxplot:C2}) and a random-design histogram
with overlaid cross-evaluations of the six optimized designs
(\figref{fig:advdiff:histogram:C1} and
\figref{fig:advdiff:histogram:C2}).

\paragraph{Active sensors and bootstrap variability.}
The diffusion-dominated and moderate boxplots
(\figref{fig:advdiff:boxplot:C1}, \figref{fig:advdiff:boxplot:C2})
reproduce the qualitative structure of the advection-dominated
boxplot (\figref{fig:advdiff:boxplot}): a small subset of the
$\nobs = 50$ candidate sensors receives essentially all weight
under every design objective, the bootstrap boxes are narrow relative to
inter-objective gaps, and the design objectives disagree on both primary
and secondary sensor selection.
Two quantitative differences follow from the regime change.
First, the primary goal-oriented sensor shifts upstream as
transport becomes more diffusive: sensor~26, which anchors the
advection-dominated designs just past the first obstacle, loses
its dominance in favor of sensors placed farther left in the
domain, consistent with the contraction of the advective pathway.
Second, the inter-objective spread at each active sensor contracts.
In the advection-dominated regime the spread across the six
goal-oriented design objectives at the single dominant sensor approaches
$0.4$; in the moderate regime it is closer to $0.2$; and in the
diffusion-dominated regime the six goal-oriented design objectives
concentrate weight at the same two to three sensors with spreads
below $0.1$.

\paragraph{KL versus goal-oriented design separation.}
In the advection-dominated regime, the KL baseline places its
weight on sensors that every goal-oriented design objective ignores
(\secref{sec:experiments:advdiff:designs}); the two groups of dominant
sensors are disjoint (\figref{fig:advdiff:boxplot}).
As transport becomes more diffusive this separation narrows.
In the moderate regime the KL and goal-oriented top sensors
differ but overlap partially (\figref{fig:advdiff:boxplot:C2}); in
the diffusion-dominated regime the KL dominant sensors coincide
with the goal-oriented dominant sensors on the primary selection,
and the disagreement is confined to secondary allocations
(\figref{fig:advdiff:boxplot:C1}).
The physical mechanism is the same one responsible for the
sensor-shift discussed above: as diffusion broadens the
informative region, a single sensor can simultaneously constrain
the source parameters (which KL rewards) and the downstream
prediction (which the goal-oriented design objectives reward), so the two
notions of information become less orthogonal.

\paragraph{Practical cost of mismatched designs.}
The random-design histograms
(\figref{fig:advdiff:histogram:C1},
\figref{fig:advdiff:histogram:C2}) show the corresponding
shrinkage of the cross-evaluation penalty.
In the advection-dominated regime
(\figref{fig:advdiff:histogram}), the KL-optimal design
evaluated under any goal-oriented design objective sits deep inside the
random-design distribution --- comparable to an arbitrary
allocation on the simplex --- and the reverse holds for the six
goal-oriented designs under the KL design objective.
In the moderate regime the KL-optimal design moves toward the
favorable edge of the random-design histogram of each
goal-oriented design objective; its penalty relative to the matched
optimum is visible but far smaller than in the advection regime.
In the diffusion-dominated regime this penalty further
shrinks: the KL-optimal design and the six goal-oriented optima
all sit within approximately one histogram-width of the matched
optimum under any design objective, and the cross-evaluation asymmetry
between parameter-focused and prediction-focused designs
weakens.
The categorical error that misspecifying the objective class
incurs in the advection-dominated regime therefore softens to a
bounded quantitative error as diffusion broadens the information
available at each candidate.

\paragraph{Rounding to a deployable integer budget.}
A deployable design requires rounding the continuous optimum
$\design^\star \in \simplex$ to an integer measurement allocation
$\vv{n}^\star \in \mathbb{Z}_{\geq 0}^{\nobs}$ with
$\sum_k n^\star_k = S$, where $S$ is the total measurement
budget.
We propose using Hamilton's largest-remainder apportionment rule:
assign each sensor $\lfloor S w_k^\star \rfloor$ measurements,
then distribute the $S - \sum_k \lfloor S w_k^\star \rfloor$
remaining measurements to the sensors with the largest
fractional parts $S w_k^\star - \lfloor S w_k^\star \rfloor$.
\figref{fig:advdiff:budget} shows the result for the most
risk-averse design objective $U_3$ at three budgets
$S \in \{5, 10, 20\}$ in the advection-dominated regime.
The smallest budget places all five measurements on the
top-left pair of sensors, including multiple measurements at
sensor~26 where $U_3$ concentrates $\sim 0.7$ of its continuous
weight.
As $S$ grows the count at the dominant sensors rises first, and
only once those sites carry a commensurate number of
measurements do additional sensors activate to populate the
secondary support.
The rounded allocations therefore scale from highly concentrated
at small budgets to spatially diverse at larger budgets,
respecting the weight hierarchy of the continuous optimum without
a separate integer optimization.
This behavior is qualitatively unchanged across the three
regimes; only the underlying continuous optima, and hence the
specific sites populated, differ.
\begin{figure}[t]
  \centering
  \includegraphics[width=\textwidth]{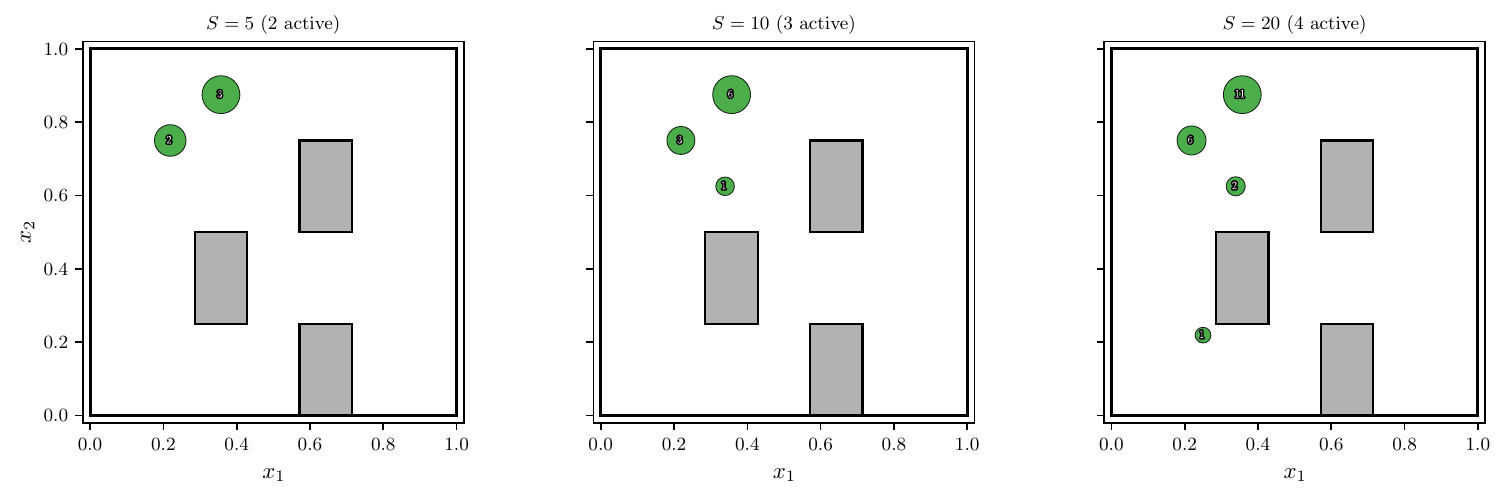}
  \caption{%
    Rounded integer measurement allocations for the $U_3$
    optimum in the advection-dominated regime at three budgets
    $S \in \{5, 10, 20\}$.
    Filled markers: active sensor locations, sized by integer
    measurement count at that location.
    }
  \label{fig:advdiff:budget}
\end{figure}
Increasing $S$ first grows the measurement count at
dominant sensors and then activates additional sensors
populating the secondary support of the continuous
optimum.%

\begin{figure}[t]
  \centering
  \includegraphics[width=\textwidth]{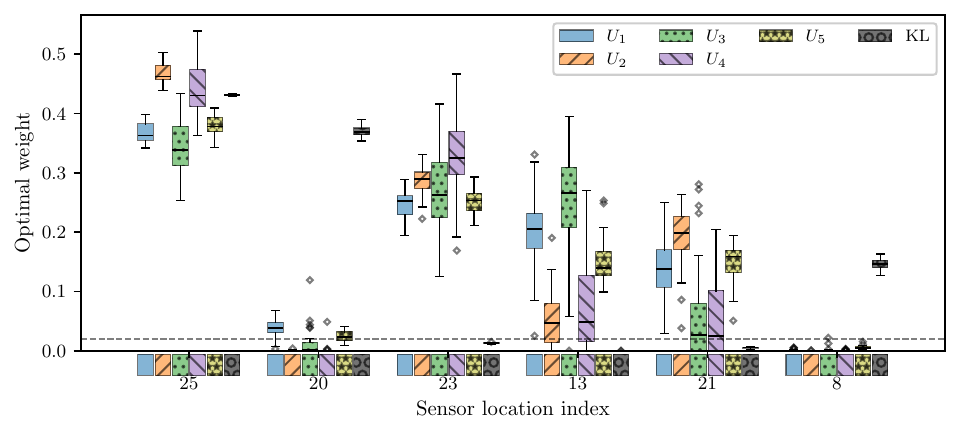}
  \caption{%
    Bootstrap distribution of optimal sensor weights in the
    diffusion-dominated regime ($D = 0.1$,
    $\mathrm{Pe}_h = 0.024$), across $30$ resampled training
    sets.
    Format matches \figref{fig:advdiff:boxplot}.%
  }
  \label{fig:advdiff:boxplot:C1}
\end{figure}

\begin{figure}[t]
  \centering
  \includegraphics[width=\textwidth]{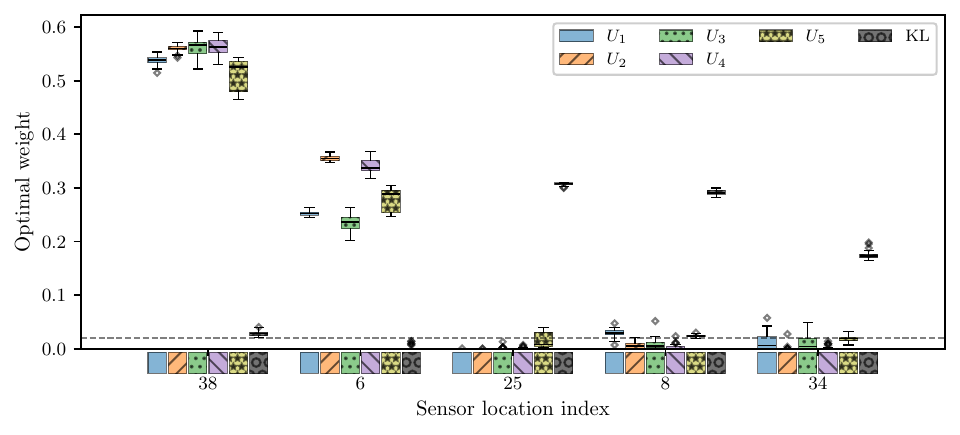}
  \caption{%
    Bootstrap distribution of optimal sensor weights in the
    moderate regime ($D = 0.02$, $\mathrm{Pe}_h = 0.120$),
    across $30$ resampled training sets.
    Format matches \figref{fig:advdiff:boxplot}.%
  }
  \label{fig:advdiff:boxplot:C2}
\end{figure}

\begin{figure}[t]
  \centering
  \includegraphics[width=\textwidth,height=3.0in,keepaspectratio]{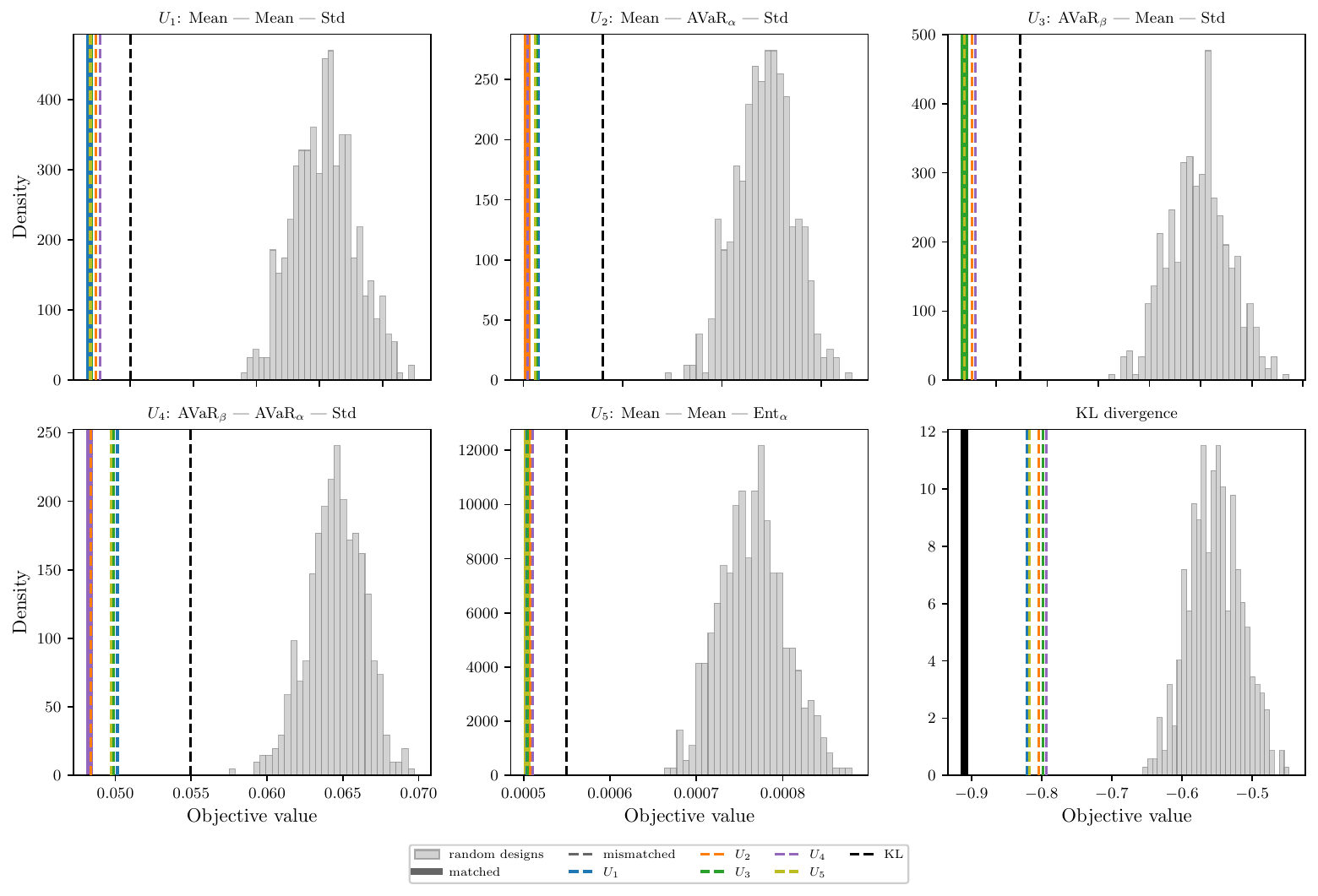}
  \caption{%
    Random-design histogram and optima cross-evaluation in the
    diffusion-dominated regime ($D = 0.1$,
    $\mathrm{Pe}_h = 0.024$).
    Format matches \figref{fig:advdiff:histogram}.%
  }
  \label{fig:advdiff:histogram:C1}
\end{figure}

\begin{figure}[t]
  \centering
  \includegraphics[width=\textwidth,height=3.0in,keepaspectratio]{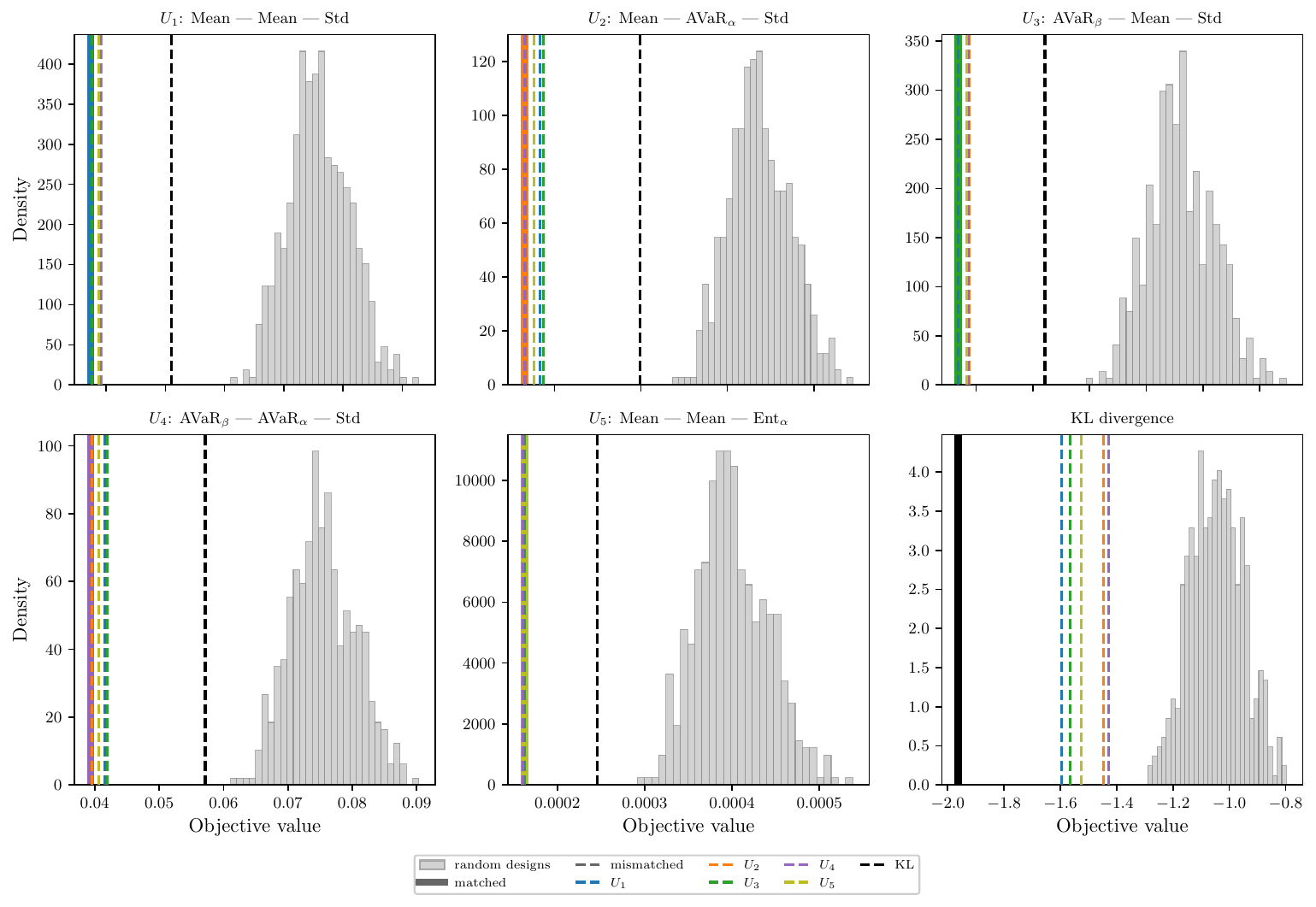}
  \caption{%
    Random-design histogram and optima cross-evaluation in the
    moderate regime ($D = 0.02$, $\mathrm{Pe}_h = 0.120$).
    Format matches \figref{fig:advdiff:histogram}.%
  }
  \label{fig:advdiff:histogram:C2}
\end{figure}

\section{Continuity of the Smoothed AVaR Value Function}\label{sec:app_smoothed_avar}

In this appendix, we verify the claim that the optimal smoothed AVaR path is continuous
with respect to the confidence parameter $\alpha\in[0,1)$. Suppose
$f:\Delta^K\to\mathcal{L}_1$, $r>0$, and $t\in[1,\infty)$, and consider the optimal value function
\[
  h(t) := \min_{w\in\Delta^K}\sup_{\theta\in\mathfrak{A}_t} \left\{\mathbb{E}[\theta f(w)] - \tfrac{1}{2r}\mathbb{E}[\theta^2]\right\},
\]
where
\[
  \mathfrak{A}_t := \{\theta\in\mathcal{L}_\infty\,\vert\,\mathbb{E}[\theta]=1,\;\; 0\le\theta\le t\;\;\text{a.s.}\}
\]
and $\mathcal{L}_1$ denotes the Lebesgue space of absolutely integrable random variables 
and $\mathcal{L}_\infty$ denotes the Lebesgue space of essentially bounded
random variables.
Recall that the inner maximization problem is the dual representation of the
smoothed AVaR.  Ultimately, we will show that $\alpha\mapsto h(1/(1-\alpha))$ is
continuous. To achieve this, we will first show
that $h(\cdot)$ is finite-valued, concave and increasing, implying that it is continuous.  The
target function is then continuous since it is the composition of continuous
maps.  To begin, we note that for any $t_1,t_2\in[1,\infty)$ and $\lambda\in[0,1]$,
we have that
\[
  \lambda\mathfrak{A}_{t_1}+(1-\lambda)\mathfrak{A}_{t_2}\subseteq\mathfrak{A}_{\lambda t_1+(1-\lambda)t_2}.
\]
Here, the sum on the left-hand side is the usual Minkowski sum of sets.  To
prove this inclusion, let $\theta_1\in\mathfrak{A}_{t_1}$ and
$\theta_2\in\mathfrak{A}_{t_2}$, then
\[
  \mathbb{E}[\lambda\theta_1+(1-\lambda)\theta_2]=\lambda\mathbb{E}[\theta_1]+(1-\lambda)\mathbb{E}[\theta_2]=\lambda+(1-\lambda)=1
\]
and
\[
  0\le\lambda\theta_1+(1-\lambda)\theta_2\le\lambda t_1+(1-\lambda)t_2 \quad\text{a.s.}
\]
Hence, $\lambda\theta_1+(1-\lambda)\theta_2\in\mathfrak{A}_{\lambda t_1+(1-\lambda)t_2}$.
Using this inclusion, we then have that
\[
\begin{aligned}
  h(\lambda t_1+(1-\lambda)t_2) &\ge \min_{w\in\Delta^K} \left\{\mathbb{E}[(\lambda\theta_1+(1-\lambda)\theta_2) f(w)] - \tfrac{1}{2r}\mathbb{E}[(\lambda\theta_1+(1-\lambda)\theta_2)^2]\right\} \\
  &\ge \min_{w\in\Delta^K}\{\lambda(\mathbb{E}[\theta_1 f(w)]-\tfrac{1}{2r}\mathbb{E}[\theta_1^2])+(1-\lambda)(\mathbb{E}[\theta_2 f(w)]-\tfrac{1}{2r}\mathbb{E}[\theta_2^2])\}
\end{aligned}
\]
for all $\theta_1\in\mathfrak{A}_{t_1}$, $\theta_2\in\mathfrak{A}_{t_2}$ and $\lambda\in[0,1]$, where the second inequality follows from the concavity of $\theta\mapsto-\tfrac{1}{2r}\mathbb{E}[\theta^2]$.
Maximizing the right-hand side over $\theta_1$ and $\theta_2$, and noting that
the minimum of a sum is greater than or equal to the sum of the individual
minimums demonstrates that $h(\cdot)$ is concave.  
Additionally, since $\mathfrak{A}_{t_1}\subseteq\mathfrak{A}_{t_2}$ for all $t_2\ge t_1\ge 1$, we have that the supremum over $\mathfrak{A}_{t_2}$ is greater than or equal to the supremum over $\mathfrak{A}_{t_1}$ and therefore $h(\cdot)$ is increasing.
Finally, since $f(w)\in\mathcal{L}_1$ for all $w\in\Delta^K$, $h(\cdot)$ is finite and these three properties ensure that $h(\cdot)$ is continuous on $[1,\infty)$. Consequently, $\alpha\mapsto h(1/(1-\alpha))$ is continuous and increasing on $[0,1)$ since it is the composition of continuous and increasing maps.
\else

\fi

\end{document}